\numberwithin{equation}{section}
\newtheorem{definition}{Definition}[section]
\newtheorem{proposition}[definition]{Proposition}
\newtheorem{remarkth}[definition]{Remark}
\renewcommand{\emph}[1]{{\bfseries\itshape{#1}}}
\newcommand{\lvec}[1]{\overleftarrow{#1}}
\newcommand{\rvec}[1]{\overrightarrow{#1}}
\newcommand\prol{\@ifstar{\@proldf}{\@prolpf}}  
\def\@prolpf{\@ifnextchar[{\@prolpf@wrt}{\@prolpf@}}
\def\@prolpf@wrt[#1]#2{\@ifnextchar[{\@prolpf@wrt@at{#1}{#2}}{\@prolpf@wrt@{#1}{#2}}}
\def\@prolpf@wrt@at#1#2[#3]{\prolsymbol^{#1}_{#3}#2}
\def\@prolpf@wrt@#1#2{\prolsymbol^{#1}#2}
\def\@prolpf@#1{\@ifnextchar[{\@prolpf@at{#1}}{\@prolpf@@{#1}}}
\def\@prolpf@at#1[#2]{\prolsymbol_{#2}#1}
\def\@prolpf@@#1{\prolsymbol#1}
\def\@proldf{\@ifnextchar[{\@proldf@wrt}{\@proldf@}}
\def\@proldf@wrt[#1]#2{\@ifnextchar[{\@proldf@wrt@at{#1}{#2}}{\@proldf@wrt@{#1}{#2}}}
\def\@proldf@wrt@at#1#2[#3]{\prolsymbol^{*#1}_{#3}#2}
\def\@proldf@wrt@#1#2{\prolsymbol^{*#1}#2}
\def\@proldf@#1{\@ifnextchar[{\@proldf@at{#1}}{\@proldf@@{#1}}}
\def\@proldf@at#1[#2]{\prolsymbol^*_{#2}#1}
\def\@proldf@@#1{\prolsymbol^*#1}
\def\prolsymbol{\mathcal{T}}
\begin{document}

\title[Poisson-Lie groups, bi-Hamiltonian systems and integrable deformations]{Poisson-Lie groups, bi-Hamiltonian systems and integrable deformations}

\author[A.\ Ballesteros]{Angel Ballesteros}
\address{A.\ Ballesteros:
Departamento de F{\'\i}sica, Universidad de Burgos \\
Burgos, Spain}
\email{angelb@ubu.es}

\author[J.\ C.\ Marrero]{Juan C.\ Marrero}
\address{Juan C.\ Marrero:
ULL-CSIC Geometr\'{\i}a Diferencial y Mec\'anica Geom\'etrica\\
Departamento de Matem\'aticas, Estad{\'\i}stica e IO, Secci\'on de
Ma\-te\-m\'a\-ti\-cas, Universidad de La Laguna \\
 La Laguna, Tenerife, Canary Islands, Spain} \email{jcmarrer@ull.edu.es}

\author[Z. Ravanpak]{Zohreh Ravanpak}
\address{Z.\ Ravanpak: 
Department of Mathematics, Faculty of Science \\
Azarbaijan Shahid Madani University \\ 
Tabriz, Iran} 
\email{z.ravanpak@azaruniv.edu}

\thanks{A.B. has been partially supported by Ministerio de Econom\'{i}a y Competitividad (MINECO, Spain) under grants MTM2013-43820-P and MTM2016-79639-P (AEI/FEDER, UE), and by Junta de Castilla y Le\'on (Spain) under grants BU278U14 and VA057U16. J.C.M. has been partially supported by Ministerio de Econom\'{i}a y Competitividad (MINECO, Spain) under grant MTM 2015-64166-C2-2P. Z.R. has been partially supported by Ministry of Science Research and Technology (MSRT, Iran) under grant 2015-215401}

\keywords{Lie-Poisson structures, Poisson-Lie groups, Hamiltonian systems, bi-Hamiltonian systems, completely integrable systems, integrable deformations, coalgebras}

\subjclass[2010]{37K10, 53D17, 37J35, 34A26, 34C14, 17B62, 17B63}

\begin{abstract}
Given a Lie-Poisson completely integrable bi-Hamiltonian system on $\mathbb{R}^n$, we present a method which allows us to construct, under certain conditions, a completely integrable bi-Hamiltonian deformation of the initial Lie-Poisson system on a non-abelian Poisson-Lie group $G_\eta$ of dimension $n$, where $\eta \in \mathbb{R}$ is the deformation parameter. Moreover, we show that from the two multiplicative (Poisson-Lie) Hamiltonian structures on $G_\eta$ that underly the dynamics of the deformed system and by making use of the group law on $G_\eta$, one may obtain two completely integrable Hamiltonian systems on $G_\eta \times G_\eta$. By construction, both systems admit reduction, via the multiplication in $G_\eta$, to the deformed bi-Hamiltonian system in $G_\eta$. The previous approach is applied to two relevant Lie-Poisson completely integrable bi-Hamiltonian systems: the Lorenz and Euler top systems.      

\end{abstract}

\vspace{1cm}

\maketitle

\section{Introduction}

It is well-known that a Hamiltonian system on a symplectic manifold $M$ of dimension $2r$ is (Liouville) completely integrable if there exist $r$ first integrals that pairwise commute and which are functionally independent in a dense open subset $U$ of $M$. In such a case, $U$ admits a Lagrangian foliation and the solutions of the Hamiltonian dynamics live in the leaves of this foliation (see \cite{Ar}). The previous notion may be extended, in a natural way, for the more general case when the phase space $M$ is a Poisson manifold $P$, not necessarily symplectic (for more details, see \cite{LaMiVa}).
Following this approach, a dynamical system on a manifold $P$ is said to be bi-Hamiltonian if admits two Hamiltonian descriptions with respect to two compatible Poisson structures on $P$. Bi-hamiltonian and completely integrable Hamiltonian systems are closely related since, under certain conditions, a bi-Hamiltonian system is completely integrable (see, for instance, \cite{KoMa}).   

We also recall that for a multiplicative Poisson structure on a Lie group $G$,  the multiplication is a Poisson epimorphism. In these conditions, the dual space ${\frak g}^*$ of the Lie algebra ${\frak g}$ of $G$ admits a Lie algebra structure in such a way that the couple $({\frak g}, {\frak g}^*)$ is a Lie bialgebra. In fact, the Lie algebra structure on ${\frak g}^*$ is defined by the dual map of an adjoint $1$-cocycle on ${\frak g}$ with values in $\Lambda^2 {\frak g}$. Conversely, an adjoint $1$-cocycle on a Lie algebra ${\frak g}$ with values in $\Lambda^2 {\frak g}$ whose dual map satisfies the Jacobi identity, induces a unique multiplicative Poisson structure on a connected simply connected Lie group with Lie algebra ${\frak g}$ (see~\cite{Dr}). Lie groups which are endowed with multiplicative Poisson structures are called Poisson-Lie groups and Lie-Poisson structures on the dual space of a Lie algebra ${\frak g}$ are examples of abelian Poisson-Lie groups (for more details see, for instance, \cite{Va}). Poisson-Lie groups are instances of Poisson coalgebras for which the comultiplication map is given by the group law, and the quantization of the former are the so-called quantum groups, which are the underlying symmetries of many relevant quantum integrable models (see, for instance,~\cite{CP,DrQG}).

As it was shown in~\cite{BaRa}, Poisson coalgebras can be systematically used in order to construct completely integrable Hamiltonian systems with an arbitrary number of degrees of freedom. Moreover, under this approach, deformations of Poisson coalgebras provide integrable deformations of the previous systems, and all constants of the motion can be explicitly obtained. Since then, this approach has been extensively used in order to construct different types of finite-dimensional integrable systems (see~\cite{jpcs,BaBlMu,LieH,BRcluster} and references therein) and several closely related constructions relying on the modification of the underlying Poisson coalgebra symmmetry have been also proposed in~\cite{comodule,Grabowski,Musso}.

Nevertheless, the generalization of the Poisson coalgebra approach to bi-Hamiltonian systems was still lacking, and the aim of this paper is to fill this gap by presenting a systematic approach for the construction of integrable deformations of bi-Hamiltonian systems, which will be based on the theory of multiplicative Poisson structures on Lie groups. 

To achieve this goal, we will firstly need an appropriate geometric interpretation of the results recently presented in~\cite{BaBlMu}. In particular, from a Lie-Poisson completely integrable Hamiltonian system defined on the dual space ${\frak g}^*$ of a Lie algebra ${\frak g}$ and in the presence of an arbitrary uni-parametric family $\{\Psi_{\eta}\}_{\eta \in \mathbb{R}}$ of adjoint $1$-cocycles whose dual maps satisfy the Jacobi identity, we will show that a Hamiltonian deformation of the initial system can be constructed on a connected and simply connected Lie group with Lie algebra ${\frak g}_{\eta}^*$. When $\eta$ approaches to zero, one recovers the initial system on ${\frak g}^*$ and, under certain conditions, the deformed system is also completely integrable. Now, by using that the multiplication in $G_\eta$ is associative and a Poisson epimorphism, one may obtain new  Hamiltonian systems with more degrees of freedom on $N$ copies of $G_\eta$, by coupling of the integrable Hamiltonian deformation in $G_\eta$. We will show that, by construction, these systems admit reduction, via the multiplication, to the deformed system in $G_\eta$. Moreover, under certain conditions, they are also completely integrable.

Secondly, if we want to generalize the previous construction when the initial completely integrable Hamiltonian system is bi-Hamiltonian with respect to two compatible Lie-Poisson structures $\Pi_0$ and $\Pi_1$ on $\mathbb{R}^n$, this implies that we have to deal with two compatible Lie algebra structures $[\cdot, \cdot]_0$ and $[\cdot, \cdot]_1$ on $\mathbb{R}^n$ which induce the two compatible Poisson structures $\Pi_0$ and $\Pi_1$. As we will show in this paper, the following important considerations and findings arise as a consequence of the bi-Hamiltonian approach:
\begin{itemize}
\item
In order to stay within the category of bi-Hamiltonian systems, we will have to impose that the integrable deformation of the initial system has to be bi-Hamiltonian with respect to two compatible multiplicative Poisson structures $\Pi_{0\eta}$ and $\Pi_{1\eta}$ on the {\em same} Lie group $G_\eta$. So, we should be able to find a common uni-parametric family of adjoint $1$-cocycles $\{\Psi_{\eta}\}_{\eta \in \mathbb{R}}$ for the Lie algebra structures $[\cdot, \cdot]_0$ and $[\cdot, \cdot ]_1$.

\smallskip

\item
When comparing with the prescriptions given by the generic Poisson coalgebra method used in~\cite{BaBlMu,BaRa}, now we have more constraints in choosing the two Hamiltonian functions $H_{0\eta}$ and $H_{1\eta}$ for the bi-Hamiltonian system on $G_{\eta}$. The reason is that $H_{0\eta}$  and $H_{1\eta}$ must be commuting functions for the Poisson structures $\Pi_{0\eta}$ and $\Pi_{1\eta}$. This, in some cases, fixes uniquely the definition of the Hamiltonians $H_{0\eta}$ and $H_{1\eta}$.

\smallskip

\item
The dynamics of the (completely integrable) Hamiltonian systems $(\Pi_{0\eta}, H_{0\eta})$ and  $(\Pi_{1\eta}, H_{1\eta})$ on $G_\eta$ coincide. However, the coupling, on one hand, of the system $(H_{0\eta}, \Pi_{0\eta})$ on $G_{\eta} \times G_{\eta}$ and the coupling, on the other hand, of the system $(H_{1\eta}, \Pi_{1\eta})$ on $G_{\eta} \times G_{\eta}$ can be used to produce two completely integrable Hamiltonian systems on the Lie group $G_\eta \times G_\eta$ which do not have, in general, the same dynamics. In other words, the method does not provide, in general, a bi-Hamiltonian system on the phase space $G_\eta \times G_\eta$.

\smallskip

\item
Nevertheless, the two completely integrable Hamiltonian systems on $G_\eta \times G_\eta$ admit reduction, via the multiplication, to the deformed bi-Hamiltonian system on $G_\eta$. So, for this reason, we can say that the Hamiltonian systems on $G_\eta \times G_\eta$ are `quasi-bi-Hamiltonian'.  Moreover, in the same way as in the general method~\cite{BaRa} and by using the associativity of the multiplication in $G_\eta$, one may extend this construction in order to obtain  two multiplicative completely integrable Hamiltonian systems on $N$ copies of $G_\eta$, with $N \geq 2$. By construction, these systems admit reduction (via the multiplication) to the deformed bi-Hamiltonian system on $G_\eta$ and will preserve for any $N$ its `quasi-bi-Hamiltonian'\ nature.

\smallskip

\item The method here presented is fully constructive and it could be applied to any Lie-Poisson completely integrable bi-Hamiltonian system such that both Lie algebra structures $[\cdot, \cdot]_0$ and $[\cdot, \cdot ]_1$ have a common 1-cocycle. Indeed, if this cocycle is found to be multiparametric, then we would obtain a multiparametric integrable deformation of the initial bi-Hamiltonian system.

\end{itemize}

The paper is structured as follows. In Section \ref{Bi-Ha-sy-Po-Lie-gr}, we will review some definitions and basic results on Poisson structures, Poisson-Lie groups and Poisson bi-Hamiltonian systems. In Section 3, we will present the systematic method to obtain integrable deformations of Lie-Poisson bi-Hamiltonian systems. For the sake of clarity, we will exemplify the method to the particular case when our initial dynamical system is a specific Lie-Poisson completely integrable bi-Hamiltonian system on $\mathbb{R}^4$. Our motivation for considering this system lies in the fact that its restriction to a submanifold of codimension $1$ is just an integrable limit of the well-known Lorenz dynamical system (see \cite{BaBlMu,Lo}). In Section 4 we will face the problem of the construction of the two completely integrable quasi-bi-Hamiltonian systems on $\mathbb{R}^{2n} = \mathbb{R}^n \times \mathbb{R}^n$, which will admit a reduction to the bi-Hamiltonian systems on $\mathbb{R}^n$ that have been presented in the previous section. The deformed counterpart of this construction leading to coupled systems on $G_\eta \times G_\eta$ is presented in Section 5. In order to show the fully constructive nature of the approach here introduced, in Section 6 we apply it to another relevant Lie-Poisson completely integrable bi-Hamiltonian system: an Euler top on $\mathbb{R}^3$. Finally, a concluding section closes the paper.


\section{Bi-Hamiltonian systems and Poisson-Lie groups}\label{Bi-Ha-sy-Po-Lie-gr}
In this section, we will review some definitions an basic results on Poisson-Lie groups and bi-Hamiltonian systems on Poisson manifolds (for more details, see \cite{Va}).

\subsection{Poisson manifolds and Lie-Poisson structures}
We recall that a Poisson structure on a manifold $M$ is a bivector field $\Pi$ on $M$ such that the Schouten-Nijenhuis bracket
$[\Pi, \Pi] = 0$. This is equivalent to defining a Lie algebra
structure on $C^{\infty}(M)$, whose bracket $\{\cdot,\cdot\}$ is called the Poisson bracket. The relation between the two definitions is given by the formula:
\[
\{f,g\}=\Pi(df,dg),\quad \forall f,g \in C^{\infty}(M).
\]
A Poisson structure $\Pi$ on $M$ determines in a natural way a vector bundle morphism $\Pi^{\sharp}: T^*M \to TM$ from the cotangent bundle $T^*M$ of $M$ to the tangent bundle $TM$. We also denote by $\Pi^{\sharp}$ the corresponding morphism between the space $\Omega^{1}(M)$ of $1$-forms on $M$ and the space ${\frak X}(M)$ of vector fields. 
The rank of the Poisson structure at the point $x \in M$ is just the rank of the linear map  $\Pi^{\sharp}(x): T_x^*M \to T_xM$. Since this linear map is skew-symmetric, the rank is always an even number. It is clear that $\Pi^{\sharp}(x): T_x^*M \to T_xM$ is not, in general, an isomorphism. In fact, a real $C^{\infty}$-function on $M$ is said to be a Casimir if $\Pi^{\sharp}(df) = 0$ or, equivalently,
\[
\{ f, g\} = 0, \; \; \; \forall g \in C^{\infty}(M).
\]

If the rank of a Poisson structure $\Pi$ on a manifold $M$ is maximum and equal to the dimension of $M$ then the structure is symplectic. This means that the manifold $M$ has even dimension $2n$ and it admits a closed $2$-form $\Omega$ which is non-degenerate, i.e., the vector bundle morphism $\flat_{\Omega}: TM \to T^*M$, induced by $\Omega$, is an isomorphism. In fact, in such a case, we have that $\Pi^{\sharp}$ is just the inverse morphism of $\flat_{\Omega}$. 

Another interesting class of Poisson structures are the so-called Lie-Poisson structures on $\mathbb{R}^n$. A Poisson structure on $\mathbb{R}^n$ is said to be Lie-Poisson if the bracket of two linear functions is again a linear function. So, if $(x_1, \dots, x_n)$ are the standard coordinates on $\mathbb{R}^n$, it follows that
\begin{equation}\label{Poisson-structure-eqs}
\{x_i, x_j \} = c_{ij}^k x_k, \; \; \; \mbox{ for } i, j \in \{1, \dots, n\},
\end{equation}
where $c_{ij}^k \in \mathbb{R}$. 

In fact, there exists a one-to-one correspondence between Lie-Poisson structures on $\mathbb{R}^n$ and Lie algebra structures on the same space. In fact, the Lie algebra structure $[\cdot, \cdot]$ on $\mathbb{R}^n$ associated with the Lie-Poisson structure characterized by (\ref{Poisson-structure-eqs}) is given by
\[
[x^{i}e_i, x^{j}e_j] = x^{i} x^{j} c_{ij}^k e_k.
\]     


\subsection{Poisson bi-Hamiltonian systems}
Let $\Pi$ be a Poisson structure on a manifold $M$. Then, a smooth real function $H$ (the Hamiltonian function) induces a vector field $X_{H}=\Pi^{\sharp}(dH)$ (the Hamiltonian vector field).
Hamilton\textquoteright s equations of motion for $H$ are:
 $$\dot{x}=\Pi^{\sharp}(dH) \equiv \{x,H\}.$$
So, solutions of the Hamilton equations are just the integral curves of $X_H$. The pair $(\{\cdot, \cdot\}, H)$ is said to be a Poisson Hamiltonian system.

Let $(\{\cdot, \cdot\}_{0}, H_0)$ be a Hamiltonian system. This system is said to be bi-Hamiltonian if there exists another compatible Poisson structure $\{\cdot, \cdot\}_1$ and a Hamiltonian function $H_1$ such that the corresponding Hamiltonian vector fields $X_{H_{0}}$ and $X_{H_{1}}$ coincide. This means that  
$$\dot{x}=\{x, H_0\}_{0} = \{x, H_{1}\}_{1}.$$
We recall that two Poisson structures $\Pi$ and $\Pi'$ are said to be compatible if the sum $\Pi + \Pi'$  is also a Poisson structure or, equivalently, if the Schouten-Nijenhuis bracket $[\Pi, \Pi']$ is zero.

There exist different notions of completely integrable Poisson Hamiltonian systems. In this paper, we will adopt the following definition. Let $\Pi$ be a Poisson structure, with Poisson bracket $\{\cdot, \cdot\}$, on a manifold $M$ of dimension $n$ such that the rank of $\Pi$ is constant and equal to $2r \leq n$ in a dense open subset of $M$. A Poisson Hamiltonian system $(\{\cdot, \cdot\}, H)$ is said to be completely integrable if there exist $\varphi_1, \dots, \varphi_{r-1} \in C^{\infty}(M)$ satisfying the two following conditions:
\begin{enumerate}  
\item
The functions $H$ and $\varphi_1, \dots, \varphi_{r-1}$ are functionally independent in a dense open subset of $M$, that is,
\[
dH \wedge d\varphi_1 \wedge \dots \wedge d\varphi_{r-1} \neq 0,
\]
in a dense open subset of $M$.
\item
They are first integrals of the Hamiltonian system that pairwise commute, i.e.,
\[
\{\varphi_j, H\} = 0 \quad \mbox{ and } \quad \{\varphi_j, \varphi_k\} = 0, \; \; \; \mbox{ for } j, k \in \{1, \dots, {r-1}\}.
\]
\end{enumerate}

\subsection{Poisson-Lie groups}
A multiplicative Poisson structure $\Pi$ on a Lie group $G$ is a Poisson structure such that the multiplication $m: G \times G \to G$ is a Poisson epimorphism or, equivalently, 
$$\Pi(gh) = (T_gr_h)(\Pi(g)) + (T_hl_g)(\Pi(h)), \; \forall g, h \in G.$$
where $r_h: G \to G$ and $l_g: G \to G$ are the right and left translation by $h$ and $g$, respectively. In this case, $G$ is called a Poisson-Lie group.

Note that $\mathbb{R}^n$ endowed with a linear Poisson structure is an abelian Poisson-Lie group.
 
For a multiplicative Poisson structure $\Pi$ on a Lie group $G$, the linear map $\psi = d_{\mathfrak e}\Pi: \mathfrak{g}\longrightarrow \wedge^{2}\mathfrak{g}$ is a $1$-cocycle, i.e.
   $$[\xi, \psi{\eta}] - [\eta, \psi{\xi}] -\psi[\xi,\eta]=0,\quad \forall\xi,\eta\in \mathfrak{g} $$
 and the dual map $\psi^{*}: \Lambda^2{\mathfrak g}^* \longrightarrow \mathfrak g^{*}$ is a Lie bracket on $\mathfrak g^{*}$. In other words, the couple $(\mathfrak g, \mathfrak g^{*})$ is a Lie bialgebra. Note that if $\Pi$ is a multiplicative Poisson structure on $G$ then $\Pi({\frak e}) = 0$, ${\frak e}$ being the identity element in $G$.
 
On the other hand, if $((\mathfrak g, [\cdot, \cdot]), (\mathfrak g^{*}, [\cdot, \cdot]^{*}))$ is a Lie bialgebra and $G$ is a connected simply-connected Lie group with Lie algebra ${\frak g}$, then $G$ admits a multiplicative Poisson structure $\Pi$ and $[\cdot, \cdot]^* = d_{\mathfrak e}^*\Pi $.   


\section{Bihamiltonian deformations and integrability}

In this section, we will present a systematic method in order to obtain integrable deformations of Lie-Poisson bi-Hamiltonian systems and we will use the Lorenz system as a guiding example.

We recall that our initial data is a dynamical system $D$ on $\mathbb{R}^n$, which is bi-Hamiltonian with respect to two compatible linear Poisson structures. In fact, in the examples here presented, the system $D$ is completely integrable. The aim of this construction is two-fold:
\begin{itemize}

\item [$\bullet$] Firstly, to construct a bi-Hamiltonian deformation $D_\eta$ of the dynamical system on $\mathbb{R}^n$, whose bi-Hamiltonian structure will be provided by a pair of multiplicative (Poisson-Lie) structures on a non-abelian Lie group $G_\eta$. We will see that, under certain conditions and for every $\eta \in \mathbb{R}$,  this method is fully constructive and systematic, and guarantees that the limit $\eta\to 0$ of $D_\eta$ is just the initial dynamical system on $\mathbb{R}^n$. 

\medskip

\item [$\bullet$] Secondly, under the assumption that the bi-Hamiltonian system on $G_\eta$ is completely integrable, we will try to construct two completely integrable Hamiltonian systems on the product Lie group $G_{\eta} \times G_{\eta}$ whose projection, via the multiplication $\cdot_{\eta}: G_\eta \times G_\eta \to G_\eta$ in $G_\eta$, is just the bi-Hamiltonian and completely integrable system on $G_\eta$.
\end{itemize}

In the sequel we will make this construction explicit by splitting it into several steps.

\subsection{The system $D$ and its bi-Hamiltonian structure.}\label{D-bi-Ha-1}  
The Lorenz dynamical system (see \cite{Lo}) is given by the differential equations  
\begin{equation}\label{Lorenz-eqs}
\begin{array}{rcl}
\dot{x}&=&\sigma(y-x),\\
\dot{y}&=&\rho x-xz-y,\\
\dot{z}&=&-\beta z+xy,
\end{array}
\end{equation}
where $\sigma$ and $\rho$ are the Prandtl and Rayleigh numbers, respectively, and $\beta$ is the aspect ratio. In \cite{StEu,TaWe} (see also \cite{GuNu}), the authors consider the conservative limit of the previous equations, which is obtained through the following rescalling 
$$t\rightarrow\epsilon t,\quad x\rightarrow \frac{1}{\epsilon}x,\quad y\rightarrow \frac {1}{\sigma\epsilon^2}y,\quad z\rightarrow \frac{1}{\sigma\epsilon^2}z,\quad \epsilon=\frac{1}{\sqrt{\sigma\rho}}$$
In the limit $\epsilon\rightarrow 0$, the system~(\ref{Lorenz-eqs}) reduces to
\begin{equation}\label{Lorenz-rescall}
\begin{array}{rcl}
\dot{x}&=&y,\\
\dot{y}&=& x(1-z),\\
\dot{z}&=&xy.
\end{array}
\end{equation}
Furthermore, the transformation
$$x=x_1,\quad y=\frac{x_2}{2}, \quad z=\frac{x_{3}+2}{2} 
$$
carries~\eqref{Lorenz-rescall} into:
\[
\dot{x_1}=\frac{x_2}{2},\quad \dot{x_2}=-x_1x_3,\quad \dot{x_3}=x_1x_2.
\]
It is straightforward to check that the previous system is bi-Hamiltonian with respect to the Poisson structures $\{\cdot, \cdot \}_{0}$ and $\{\cdot, \cdot \}_{a}$ in $\mathbb{R}^3$ which are characterized by
\[
\begin{array}{rclcrclcrcl}
\{x_1,x_2\}_{0} &= &\displaystyle -\frac{x_3}{2},\quad & \{x_1,x_3\}_{0} &= & \displaystyle  \frac{x_2}{2},\quad & \{x_2,x_3\}_0& = &0,\\[5pt]
\{x_1,x_2\}_{a}& = & \displaystyle \frac{1}{4},\quad & \{x_1,x_3\}_{a} & = & 0,\quad & \{x_2,x_3\}_a &= & \displaystyle -\frac{x_1}{2}.
\end{array}
\]
The corresponding Hamiltonian functions are
\begin{equation}
H_{0}=x_3-x_{1}^2 \qquad  \mbox{ and } 
\qquad
H_{1}=x_{2}^2+x_{3}^2,
\label{hamLorentz}
\end{equation}
respectively (for more details, see \cite{GuNu}).

Note that the Poisson structure $\{\cdot, \cdot \}_a$ on $\mathbb{R}^3$ is affine and the corresponding linear Poisson structure $\{\cdot, \cdot \}_l$ on $\mathbb{R}^3$ is given by
\[
\{x_1,x_2\}_{l}=0,\quad \{x_1,x_3\}_{l}=0,\quad \{x_2,x_3\}_l=-\frac{x_1}{2}.
\]
So, we can consider the extension $\{\cdot, \cdot \}_1$ to $\mathbb{R}^4$ of $\{\cdot, \cdot \}_l$ 
\begin{equation}\label{1-Poisson-bracket}
\{x_1,x_2\}_{1}=\frac{x_4}{4},\quad \{x_1,x_3\}_{1}=0,\quad \{x_2,x_3\}_1=-\frac{x_1}{2},\quad \{\cdot,x_4\}_1=0,
\end{equation}
which is a Lie-Poisson structure. In this way, the affine subspace $A$ defined by the equation $x_4 = 1$ is a Poisson submanifold, and the induced Poisson structure on $A$ is just $\{\cdot, \cdot\}_a$. In the same manner, we also denote by $\{\cdot, \cdot\}_0$ the trivial extension to $\mathbb{R}^4$ of the Poisson structure $\{\cdot, \cdot\}_0$, namely
\begin{equation}\label{0-Poisson-bracket}
\{x_1,x_2\}_{0}=-\frac{x_3}{2},\quad \{x_1,x_3\}_{0}=\frac{x_2}{2},\quad \{x_2,x_3\}_0=0,\quad \{\cdot,x_4\}_0=0,
\end{equation}
and the Casimir functions for the Poisson bracket $\{\cdot, \cdot\}_0$ are
\[
{\mathcal C}_0 = x_2^{2} + x_{3}^{2}, \qquad {\mathcal C}_0' = x_4.
\] 

The linear Poisson structures $\{\cdot, \cdot\}_0$ and $\{\cdot, \cdot\}_1$ on $\mathbb{R}^4$ turn out to be compatible, in the sense that we can define a one-parametric family of Lie-Poisson structures (a Poisson pencil)
$$\quad\{.,.\}_{\alpha}=(1-\alpha)\{.,.\}_{0}+\alpha\{.,.\}_{1},\quad \mbox{ with } \alpha \in \mathbb{R}$$
whose explicit brackets are given by
\begin{equation}\label{Lie-Poi-alpha}
\begin{array}{rclcrcl}
\displaystyle \{x_1,x_2\}_{\alpha} &= &\displaystyle \frac{\alpha x_4}{4}-(1-\alpha) \displaystyle \frac{x_3}{2}, \quad & 
\{x_1,x_3\}_{\alpha}& = &(1-\alpha)\displaystyle \frac{x_2}{2}, \\[5pt]
\{x_2,x_3\}_{\alpha}& = &(-\alpha)\displaystyle \frac{x_1}{2},\quad & \{.,x_4\}_{\alpha} & = & 0.
\end{array}
\end{equation}
Obviously, if $\{X_1, X_2, X_3, X_4\}$ is the canonical basis of $\mathbb{R}^4$ then the corresponding Lie bracket $[\cdot, \cdot]_{\alpha}$ on $\mathbb{R}^4$ is given by
\begin{equation}
\begin{array}{rclcrcl}
[X_1,X_2]_{\alpha}& = & \displaystyle \frac{\alpha X_4}{4}-(1-\alpha)\displaystyle \frac{X_3}{2}, \quad &
[X_1,X_3]_{\alpha} & = &(1-\alpha)\displaystyle \frac{X_2}{2},\\[5pt]
[ X_2,X_3] _{\alpha}& = & (-\alpha)\displaystyle \frac{X_1}{2},\quad & [\cdot,X_4]_{\alpha}& = &0.
\end{array}
\label{pencil}
\end{equation}

Therefore, we can say that the dynamical system $D$
\begin{equation}\label{bi-Hamil-abelian}
\dot{x_1}=\frac{x_2x_4}{2},\qquad \dot{x_2}=-x_1x_3,\qquad \dot{x_3}=x_1x_2, \qquad \dot{x_4} = 0,
\end{equation}
is bi-Hamiltonian with respect to the Lie-Poisson structures $\{\cdot, \cdot\}_0$ and $\{\cdot, \cdot\}_1$ with Hamiltonian functions given by $H_0 = x_3 x_4 - x_1^2$ and $H_1 = x_2^2 + x_3^2$. From the previous considerations, we also deduce that this bi-Hamiltonian system is completely integrable.
Note that the original Lorenz system is recovered within the submanifold $x_4=1$.

In general, starting from a dynamical system $D$ on $\mathbb{R}^n$, the first task consists in finding two compatible linear Poisson structures $\{\cdot, \cdot \}_0$ and $\{\cdot, \cdot \}_1$ such that our dynamical system is bi-Hamiltonian with respect to these two Poisson structures. This means that there exist two Hamiltonian functions $H_0: \mathbb{R}^n \to \mathbb{R}$ and $H_1: \mathbb{R}^n \to \mathbb{R}$ and the evolution of an observable $\varphi \in C^{\infty}(\mathbb{R}^n)$ is given by
\[
\dot{\varphi} = \{\varphi, H_0\}_0 = \{\varphi, H_1\}_{1}.
\]
In fact, in the Lorenz system that we have just described we observe that:
\begin{itemize}
\item
The Hamiltonian $H_{0}$ (respectively, $H_1$) is a Casimir function ${\mathcal C}_1$ (respectively, ${\mathcal C}_0$) for $\{\cdot, \cdot\}_1$ (respectively, $\{\cdot, \cdot\}_0$).
\item
The rank of the Poisson structures $\{\cdot, \cdot \}_0$ and $\{\cdot, \cdot\}_1$ satisfies the following condition
\[
rank \{\cdot, \cdot\}_0 = rank \{\cdot, \cdot \}_1= 2r
\]
in a dense open subset of $\mathbb{R}^n$. In the previous example $r = 1$ and $n = 4$.
\item
The Hamiltonian systems $(\{\cdot, \cdot\}_0, H_0)$ and $(\{\cdot, \cdot\}_1, H_1)$ are completely integrable. 
\end{itemize}
In general, we will denote by $\{{\mathcal C}_i, {\mathcal C}_i^j \}_{j = 1, \dots, n-2r-1}$ the Casimir functions for $\{\cdot, \cdot \}_i$, with $i=0, 1$, and by $\{\varphi^j\}_{j = 1, \cdots, r-1}$ the set of common first integrals for the Hamiltonian systems $(\{\cdot, \cdot\}_0, H_0)$ and $(\{\cdot, \cdot\}_1, H_1)$.


\subsection{Construction of the bi-Hamiltonian system $D_\eta$.}\label{bi-Ha-D-eta}  
Let $[\cdot, \cdot ]_{0}$ (respectively, $[\cdot, \cdot ]_{1}$) be the Lie algebra structure on $\mathbb{R}^n$ associated with the linear Poisson bracket $\{\cdot, \cdot\}_0$ (respectively, $\{\cdot, \cdot\}_1$). Then, we have to find a non-trivial common adjoint $1$-cocycle $\psi_{\eta}:\mathbb R^n \to \wedge^2 \mathbb R^n$, with $\eta \in \mathbb{R}$, for the Lie algebras $(\mathbb R^n,[\cdot,\cdot]_0)$ and $(\mathbb R^n,[\cdot,\cdot]_1)$ and with the initial condition $\psi_0 = 0$. 

In doing so, we deduce the following result. 
\begin{proposition}
Let $\psi_{\eta}: \mathbb{R}^n \to \Lambda^2(\mathbb{R}^n)$ be a common adjoint $1$-cocycle for the compatible Lie algebras $(\mathbb{R}^n, [\cdot, \cdot]_0)$ and $(\mathbb{R}^n, [\cdot, \cdot]_1)$ and $G_{\eta}$ a connected simply-connected Lie group with Lie algebra $(\mathbb{R}^n, [\cdot, \cdot]^*_{\eta} = \psi_\eta^*)$. If $\{\cdot, \cdot\}_{0\eta}$ and $\{\cdot, \cdot\}_{1\eta}$ are the multiplicative Poisson brackets on $G_\eta$ associated with the $1$-cocycle $\psi_\eta: \mathbb{R}^n \to \Lambda^2(\mathbb{R}^n)$ then $\{\cdot, \cdot\}_{0\eta}$ and $\{\cdot, \cdot\}_{1\eta}$ are compatible.
\end{proposition}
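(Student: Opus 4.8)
The plan is to reduce the statement to the single fact that the sum $\Pi_{0\eta}+\Pi_{1\eta}$ is itself a Poisson tensor, and then to verify this at the infinitesimal (Lie bialgebra) level. Since the Schouten--Nijenhuis bracket is bilinear and symmetric,
\[
[\Pi_{0\eta}+\Pi_{1\eta},\,\Pi_{0\eta}+\Pi_{1\eta}] = [\Pi_{0\eta},\Pi_{0\eta}] + 2\,[\Pi_{0\eta},\Pi_{1\eta}] + [\Pi_{1\eta},\Pi_{1\eta}] = 2\,[\Pi_{0\eta},\Pi_{1\eta}],
\]
because each $\Pi_{i\eta}$ is already Poisson. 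Hence proving compatibility is equivalent to proving that $\Pi_{0\eta}+\Pi_{1\eta}$ is Poisson. The first observation is that $\Pi_{0\eta}+\Pi_{1\eta}$ is again multiplicative: multiplicativity is the linear condition $\Pi(gh)=(T_gr_h)(\Pi(g))+(T_hl_g)(\Pi(h))$, so it is preserved under sums.

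Next I would pass to the Lie bialgebra side. By the correspondence recalled above (following \cite{Dr}), a multiplicative bivector field on the connected simply-connected group $G_\eta$ is Poisson if and only if its intrinsic derivative at the identity turns the Lie algebra $(\mathbb R^n,[\cdot,\cdot]^*_\eta=\psi_\eta^*)$ of $G_\eta$ into a Lie bialgebra. Writing $\delta_i:=d_{\mathfrak{e}}\Pi_{i\eta}$, the two structures $\Pi_{i\eta}$ are precisely the multiplicative Poisson tensors integrating the $1$-cocycles $\delta_i=[\cdot,\cdot]_i^*\colon (\mathbb R^n)^*\to\Lambda^2(\mathbb R^n)^*$ on $(\mathbb R^n,\psi_\eta^*)$; that each $\delta_i$ is indeed a cocycle is the dual reformulation of the hypothesis that $\psi_\eta$ is a $1$-cocycle for $[\cdot,\cdot]_i$. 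Because the intrinsic derivative is linear, $d_{\mathfrak{e}}(\Pi_{0\eta}+\Pi_{1\eta})=\delta_0+\delta_1$, and it remains to verify the two Lie bialgebra axioms for the pair $(\,(\mathbb R^n,\psi_\eta^*),\,\delta_0+\delta_1\,)$.

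The cocycle axiom is immediate: the $1$-cocycle condition (with respect to the fixed bracket $\psi_\eta^*$) is $\mathbb R$-linear in the map being tested, so $\delta_0+\delta_1$ is a cocycle because $\delta_0$ and $\delta_1$ are; equivalently, this already follows from the multiplicativity of $\Pi_{0\eta}+\Pi_{1\eta}$ noted above. The remaining axiom is that the dual map $(\delta_0+\delta_1)^*=[\cdot,\cdot]_0+[\cdot,\cdot]_1$ satisfies the Jacobi identity on $\mathbb R^n$. This is exactly the statement that the two linear Poisson structures $\Pi_0$ and $\Pi_1$ — equivalently, the Lie brackets $[\cdot,\cdot]_0$ and $[\cdot,\cdot]_1$ — are compatible, which is part of the hypothesis. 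Thus $(\,(\mathbb R^n,\psi_\eta^*),\,\delta_0+\delta_1\,)$ is a Lie bialgebra, so $\Pi_{0\eta}+\Pi_{1\eta}$ is Poisson, and the two multiplicative structures are compatible.

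The step I expect to be the main obstacle is the careful identification of which infinitesimal datum is dual to which: namely that the linearization of $\Pi_{i\eta}$ on $G_\eta$ is $[\cdot,\cdot]_i^*$, and that its cocycle property is equivalent, by Lie bialgebra duality, to $\psi_\eta$ being a common cocycle for $[\cdot,\cdot]_0$ and $[\cdot,\cdot]_1$. The conceptual crux is the structural fact that the cocycle condition is linear in the underlying bracket as well as in the cocycle, which is what allows the ``common cocycle plus compatibility'' hypotheses to propagate to the combination $[\cdot,\cdot]_0+[\cdot,\cdot]_1$. Once these dualities are pinned down, the Schouten-bracket reduction and the invocation of Drinfeld's theorem are routine.
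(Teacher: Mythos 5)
Your proof is correct and takes essentially the same route as the paper: the paper's own proof simply cites the general fact that two multiplicative Poisson brackets on a connected Lie group are compatible if and only if their dual Lie algebras are compatible, and your polarization of the Schouten bracket followed by linearization at the identity and Drinfeld's correspondence is precisely a proof of the implication of that fact needed here. The identifications you flag as the main obstacle (that $d_{\mathfrak e}\Pi_{i\eta}=[\cdot,\cdot]_i^*$ is a cocycle for $\psi_\eta^*$ by Lie bialgebra duality, and that compatibility of $[\cdot,\cdot]_0$ and $[\cdot,\cdot]_1$ gives the Jacobi identity for the sum) are exactly the content the paper leaves implicit.
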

\begin{proof}
It is a consequence of the following general result. If $H$ is a connected Lie group with Lie algebra $\frak{h}$ and $\{\cdot, \cdot\}_{0}$, $\{\cdot, \cdot\}_{1}$ are two multiplicative Poisson brackets on $H$ then the Poisson brackets are compatible if and only if the dual Lie algebras $(\frak{h}^*, [\cdot, \cdot]_0^*)$ and $(\frak{h}^*, [\cdot, \cdot]_1^*)$ are compatible.
\end{proof}

Next, we have to find a Casimir function $\mathcal C_{0\eta}$ (resp., $\mathcal C_{1\eta}$) for the Poisson bracket $\{\cdot,\cdot \}_{0\eta}$ (resp., $\{\cdot,\cdot \}_{1\eta}$) on $G_{\eta}$ in such a way that:
\begin{itemize}
\item [(i)]The Hamiltonian systems $(\{\cdot,\cdot \}_{0\eta}, H_{0\eta}:=\mathcal C_{1\eta})$ and  $(\{\cdot,\cdot \}_{1\eta}, H_{1\eta}:=\mathcal C_{0\eta})$ coincide, that is, we have a bi-Hamiltonian system $D_\eta$ on the Lie group $G_{\eta}$.
\item [(ii)]
This system is a $\eta$-deformation of the original bi-Hamiltonian system $D$, {\it i.e.},
$$\lim_{\eta \rightarrow 0}\{\cdot,\cdot \}
_{0\eta}=\{\cdot,\cdot\}_0, \; \; \; \lim_{\eta \rightarrow 0}\{\cdot,\cdot \}
_{1\eta}=\{\cdot,\cdot\}_1$$
and
$$\lim_{\eta \rightarrow 0}H
_{0\eta}=H_0, \; \; \; \lim_{\eta \rightarrow 0}H
_{1\eta}=H_1.$$
\end{itemize}
Moreover, in our examples, the Hamiltonian systems $(\{\cdot,\cdot \}_{0\eta}, H_{0\eta})$ and $(\{\cdot,\cdot \}_{1\eta}, H_{1\eta})$ are completely integrable and the ranks of $\{\cdot,\cdot \}_{0\eta}$ and $\{\cdot,\cdot \}_{1\eta}$ are again $2r$ within a dense open subset of $G_{\eta}$. We will denote by $\{{\mathcal C}_{i\eta}, {\mathcal C}_{i\eta}^j\}_{j = 1, \dots, n-2r-1}$ the Casimir functions for the multiplicative Poisson bracket $\{\cdot, \cdot\}_{i\eta}$, with $i = 0, 1$, and by $\{\varphi_{\eta}^j\}_{j=1, \dots, r-1}$ the common first integrals for the Hamiltonian systems $(\{\cdot, \cdot\}_{i\eta}, H_{i\eta})$, with $i = 0, 1$. In fact, we have that
\[
\lim_{\eta \rightarrow 0}{\mathcal C}_{i\eta}^j = {\mathcal C}_i^j, \; \; \; \mbox{ with } j = 1, \dots, n-2r-1,
\]
and
\[
\lim_{\eta \rightarrow 0}\varphi_\eta^j = \varphi^j, \; \; \; \mbox{ with } j = 1, \dots, r-1.
\]


\subsubsection{The Lorenz 1-cocycle and its associated non-abelian group $G_\eta$.}
 A straightforward computation shows that a non-trivial admissible cocycle for the family of Lie algebras $\mathfrak{g}_{\alpha}$ (see \eqref{pencil}) is:
$$
\psi_{\eta}(X_1)=0,\quad \psi_{\eta}(X_2)=-\eta X_3\wedge X_4,\quad \psi_{\eta}(X_3)=\eta X_2\wedge X_4,\quad \psi_{\eta}(X_4)=0.
$$
 So, we have the family of Lie bialgebras $(\mathfrak{g}_\alpha,\psi_{\eta})$. The Lie bracket $[\cdot, \cdot]_{\eta}^*$ on $(\mathbb{R}^4)^* \simeq \mathbb{R}^4$ obtained from the dual cocommutator map is:
\[
[X^2,X^4]_{\eta}^*=\eta X^3,\quad [X^3,X^4]_{\eta}^*=-\eta X^2,
\]
the rest of basic Lie brackets being zero.

Now, let $G_{\eta}$ be the connected and simply-connected Lie group with Lie algebra $(\mathbb{R}^4, [\cdot, \cdot]_{\eta}^*)$. Then, one may prove that $G_{\eta}$ is diffeomorphic to $\mathbb{R}^4$ and the multiplication $\cdot_\eta$ of two group elements $g = (x_1, x_2, x_3, x_4)$ and $g'=(x_1', x_2', x_3', x_4')$ reads
$$
g\cdot_\eta g' =
( x_1+x'_1,x_2+x'_2\cos(\eta x_4)+x'_3 \sin(\eta x_4),x_3-x'_2\sin(\eta x_4)+x'_3\cos(\eta x_4),x_4+x'_4).
$$
Note that $G_0$ is the abelian Lie group and $G_\eta$, with $\eta\neq 0$, is isomorphic to the special euclidean group $SE(2)$.

A basis $\{\lvec X^1,\lvec X^2,\lvec X^3,\lvec X^4\}$ (resp., $\{\rvec X^1,\rvec X^2,\rvec X^3,\rvec X^4\}$) of left-invariant (resp., right-invariant) vector fields for $G_\eta$ is found to be
\[
\{\frac{\partial}{\partial x_1}, \cos(\eta x_4)\frac{\partial}{\partial x_2}-\sin(\eta x_4)\frac{\partial}{\partial x_3}, \sin(\eta x_4)\frac{\partial}{\partial x_2}+\cos(\eta x_4)\frac{\partial}{\partial x_3}, \frac{\partial}{\partial x_4}\}
\]
(resp., $\displaystyle \{\frac{\partial}{\partial x_1}, \frac{\partial}{\partial x_2}, \frac{\partial}{\partial x_3}, \eta x_3\frac{\partial}{\partial x_2}-\eta x_2\frac{\partial}{\partial x_3}+\frac{\partial}{\partial x_4}\}$).

Finally, the adjoint action $Ad:G_{\eta}\times{\mathfrak {g}}_{\eta} \longrightarrow {\mathfrak {g}}_{\eta}$ for the Lie group $G_{\eta}$ can be straightforwardly computed:
\[
\begin{array}{rcl}
Ad_{g}(X^1)&=& X^1,\\
Ad_{g}(X^2)&=& \cos(\eta x_4) X^2-\sin(\eta x_4)X^3, \\
Ad_{g}(X^3)&=& \sin(\eta x_4) X^2+\cos(\eta x_4)X^3,\\
Ad_{g}(X^4)&=& -\eta x_3 X^2+\eta x_2 X^3+X^4.
\end{array}
\]


\subsubsection{A Poisson-Lie group structure on $G_\eta$.}\label{Po-Li-gr-st}
By construction, the following family of non-trivial admissible 1-cocycles for the Lie algebra $(\mathbb{R}^4, [\cdot, \cdot]_{\eta}^*)$ is obtained as the dual of the commutator map~\eqref{pencil}, namely
\[
\begin{array}{rclcrcl}
\psi_{\alpha}(X^1) & = &\displaystyle -\frac{\alpha}{2}X^2 \wedge X^3,\quad &  \psi_{\alpha}(X^2)& = & \displaystyle \frac{(1-\alpha)}{2} X^1\wedge X^3, \\[7pt]
 \psi_{\alpha}(X^3) &= & \displaystyle -\frac{(1-\alpha)}{2} X^1\wedge X^2,\quad & \psi_{\alpha}(X^4) &= &\displaystyle \frac{\alpha}{4}X^1\wedge X^2.
\end{array}
\]
Denote by $\Pi_{\alpha \eta}$ the (unique) multiplicative Poisson structure on $G_{\eta}$ which is induced by the $1$-cocycle $\psi_{\alpha}$. In order to obtain $\Pi_{\alpha \eta}$, we consider the $1$-form $\gamma_{\psi_{\alpha}}$ on $G_\eta$ with values in $\Lambda^2{\mathfrak g}_{\eta}$ which is characterized  by the following relation
\[
\gamma_{\psi_{\alpha}}(\lvec{X})(g) = Ad_g(\psi_{\alpha}(X)), \; \; \mbox{ for } X\in {\mathfrak g}_{\eta} \mbox{ and } g \in G_\eta.
\]
As we know (see the proof of Theorem $10.9$ in \cite{Va}), $\gamma_{\psi_{\alpha}}$ is an exact $1$-form. So, there exists a unique function $\pi: G_\eta \to \Lambda^2{\mathfrak g}_{\eta}$ satisfying
\[
\pi({\mathfrak e}) = 0 \quad \mbox{ and } \quad d\pi = \gamma_{\psi_{\alpha}},
\]
where ${\mathfrak e} = (0, 0, 0, 0)$ is the identity element in $G_\eta$.

If we suppose that
\[
\pi(g) = \pi_{ij}(g) X^{i} \wedge X^{j}, \; \; \; \mbox{ for } g \in G_\eta,
\]
then the multiplicative Poisson structure $\Pi_{\alpha \eta}$ is given by
\[
\Pi_{\alpha \eta}(g) = \pi_{ij}(g) \rvec{X^{i}} \wedge \rvec{X^{j}}
\]
(see the proof of Theorem $10.9$ in \cite{Va}).

Applying the previous process, we deduce that the corresponding compatible
multiplicative Poisson brackets $\{\cdot, \cdot \}_{\alpha \eta}$ on $G_{\eta}$ are given by 
\begin{equation}
\begin{array}{rclcrcl}
\{x_1,x_2\}_{\alpha \eta}& = & \displaystyle \frac{\alpha}{4}\frac{\sin(\eta x_4)}{\eta}-(1-\alpha)\frac{x_3}{2}, & \{x_2,x_3\}_{\alpha \eta}& = &(-\alpha)\displaystyle \frac{x_1}{2},\\[5pt]
 \{x_1,x_3\}_{\alpha \eta}& = & (1-\alpha)\displaystyle \frac{x_2}{2}+\frac{\alpha}{4}\frac{\cos(\eta x_4)-1}{\eta},
 &  \{.,x_4\}_{\alpha \eta} & = & 0.
\end{array}
\label{defpencil}
\end{equation}
As we expected, $\lim _{\eta\rightarrow 0}\{.,.\}_{\alpha \eta}=\{.,.\}_{\alpha}$. Thus, we have constructed an $\eta$-deformation~\eqref{defpencil} of the Lie-Poisson bracket (\ref{Lie-Poi-alpha}). We stress that~\eqref{defpencil} is just a multiplicative Poisson-Lie structure on the noncommutative group $G_\eta$, while~(\ref{Lie-Poi-alpha}) can be thought of as a multiplicative structure on the abelian Lie group $\mathbb{R}^4$.


\subsubsection{Casimir functions and deformed bi-Hamiltonian structure.}
In the particular cases when $\alpha=0$ and $\alpha=1$, the $\eta$-deformations $\{\cdot, \cdot\}_{0\eta}$ and $\{\cdot, \cdot\}_{1\eta}$ of the Lie-Poisson brackets $\{\cdot, \cdot\}_0$ and $\{\cdot, \cdot\}_1$ have the form
\begin{equation}\label{0-eta-Poisson-bracket}
\{x_1,x_2\}_{0\eta}  = \displaystyle -\frac{x_3}{2}, 
\qquad
\{x_1,x_3\}_{0\eta} = \displaystyle \frac{x_2}{2},  
\qquad
\{x_2,x_3\}_{0\eta} = 0,
\end{equation}
and
\begin{equation}\label{1-eta-Poisson-bracket}
\{x_1,x_2\}_{1\eta}  =  \displaystyle \frac{\sin(\eta x_4)}{4\eta}, 
\qquad \{x_1,x_3\}_{1\eta} =  \displaystyle \frac{\cos(\eta x_4)-1}{4\eta}, 
\qquad
\{x_2,x_3\}_{1\eta}  = \displaystyle -\frac{x_1}{2}.
\end{equation}
The function $x_4$ is a Casimir for both Poisson structures. 
 Other Casimir functions for these two multiplicative Poisson structures are found to be
  \[
  \mathcal C_{0\eta}=x_2^2+x_3^2, \qquad
\mathcal C_{1\eta}=\frac{\sin(\eta x_4)}{\eta}x_3-\frac{\cos(\eta x_4)-1}{\eta}x_2-x_1^2,
 \]
where it becomes clear that $\lim_{\eta \rightarrow 0}\mathcal C_{0\eta}=H_1$ and $\lim_{\eta \rightarrow 0}\mathcal C_{1\eta}=H_0$. In fact, if we denote the Casimir functions $\mathcal C_{0\eta}$ and $\mathcal C_{1\eta}$ by $H_{1\eta}$ and $H_{0\eta}$, respectively, then the dynamical systems associated with the Hamiltonian systems $(G_{\eta}, \{\cdot, \cdot \}_{0\eta}, H_{0\eta})$ and  $(G_{\eta}, \{\cdot, \cdot \}_{1\eta}, H_{1\eta})$ coincide and define the deformed dynamical system $D_\eta$ given by
\begin{equation}\label{bi-Hamil-non-abelian}
\dot{x_1}=\frac{1}{2}\frac{\sin(\eta x_4)}{\eta}x_2+\frac{1}{2} \frac{\cos(\eta x_4)-1}{\eta}x_3,\quad 
\dot{x_2}=-x_1x_3,\quad \dot{x_3}=x_1x_2,\quad \dot{x_4}=0.
\end{equation}
In other words, this dynamical system $D_\eta$ on the Lie group $G_{\eta}$ is bi-Hamiltonian with respect to the compatible multiplicative Poisson structures $\{\cdot, \cdot\}_{0\eta}$ and $\{\cdot, \cdot\}_{1\eta}$. Indeed, this deformed bi-Hamiltonian system is completely integrable and the $\eta\to 0$ limit is just the $D$ system (\ref{bi-Hamil-abelian}). The preservation of the closed nature of the trajectories under
deformation is clearly appreciated in Figure 1, where the trajectories have been found by numerical integration.

\begin{figure}[h]
\subfloat[a][]
\centering
 \scalebox{0.6}[0.6]{\includegraphics{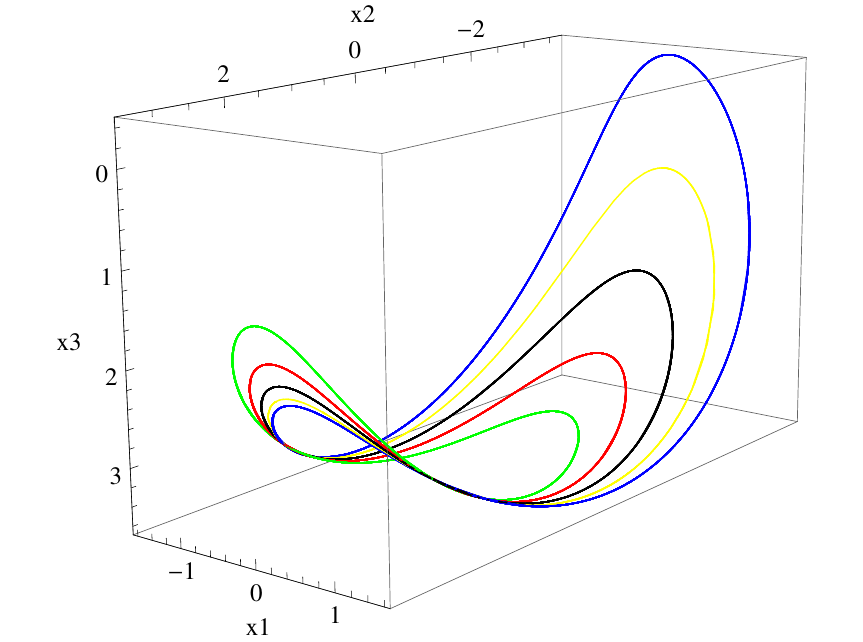}} 
\subfloat[b][]
{\includegraphics[scale=0.6]{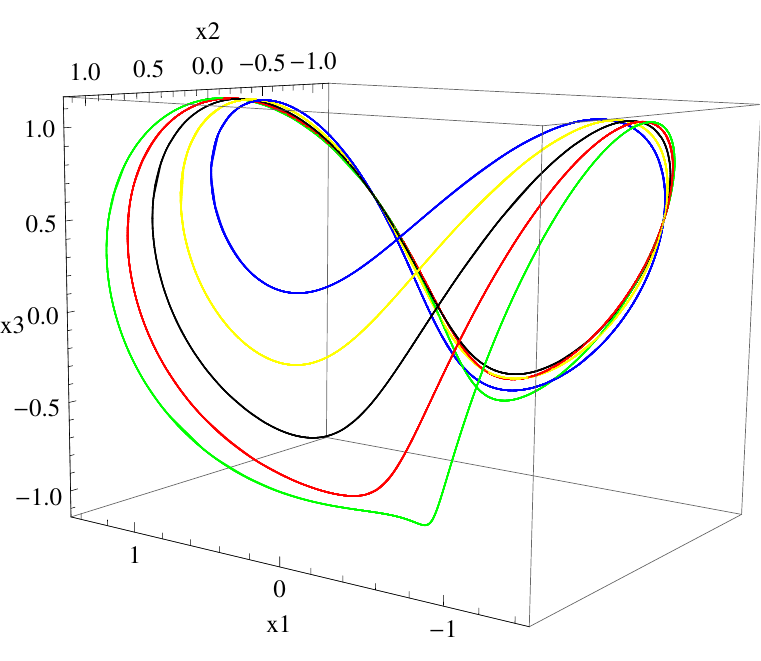}}
\caption{(A) Closed trajectories of the Lorenz system (\ref{bi-Hamil-abelian}) for $x_4=1$ and the initial data $x_1(0)=1,x_2(0)=2,x_3(0)=3$ (black line) and of the deformed Lorenz system (\ref{bi-Hamil-non-abelian}) with the same initial data, $x_4=1$, and $\eta=\frac{\pi}{4}$ (green), $\eta=-\frac{\pi}{4}$ (blue), $\eta=\frac{\pi}{8}$ (red), $\eta=-\frac{\pi}{8}$ (yellow). (B) The same figure as (A) but with initial data $x_1(0)=1,x_2(0)=-1,x_3(0)=0.5$. }
\label{fig:Lorenz3}
\end{figure}


\section{Coupled integrable deformations and non-abelian reduction}

In this last step, we present how to construct two coupled completely integrable Hamiltonian systems on the product Lie group $G_\eta \times G_\eta$ which admit reduction, via the multiplication $\cdot_{\eta} : G_\eta \times G_\eta \to G_\eta$, to the same bi-Hamiltonian system on $G_\eta$ that we denoted $D_\eta$ (see Section \ref{bi-Ha-D-eta}).
 
For this purpose, we will consider the product $\{\cdot, \cdot\}_{i\eta} \oplus \{\cdot, \cdot\}_{i\eta}$ of the multiplicative Poisson structure $\{\cdot, \cdot\}_{i\eta}$ on $G_\eta$ with itself, $i = 0, 1$. Then, we will obtain two multiplicative Poisson structures on $G_\eta \times G_\eta$ which, if there is not risk of confusion, we also denote by $\{\cdot, \cdot\}_{0\eta}$ and by $\{\cdot, \cdot\}_{1\eta}$, respectively. Note that
\[
rank \{\cdot, \cdot\}_{0\eta} = rank \{\cdot, \cdot\}_{1\eta} = 4r
\]
in a dense open subset of $G_\eta \times G_\eta$. Moreover,
\[
\{{\mathcal C}_{i\eta} \circ pr_1, {\mathcal C}_{i\eta}^j \circ pr_1, {\mathcal C}_{i\eta} \circ pr_2, {\mathcal C}_{i\eta}^j \circ pr_2\}_{j = 1, \dots, n-2r-1}
\]
are Casimir functions for $\{\cdot, \cdot\}_{i\eta}$, with $i = 0, 1$. Here, $pr_1: G_\eta \times G_\eta \to G_\eta$ and $pr_2: G_\eta \times G_\eta \to G_\eta$ are the canonical projections.

In addition, we also consider the coproduct of the Hamiltonian functions $H_{0\eta}$ and $H_{1\eta}$, that is, the Hamiltonian functions on $G_\eta \times G_\eta$ defined by $H_0 \circ \cdot_{\eta}$ and $H_1 \circ \cdot_{\eta}$, where $\cdot_{\eta} : G_\eta \times G_\eta \to G_\eta$ is the multiplication in $G_\eta$. If there is not risk of confusion, we will use the same notation $H_{0\eta}$ and $H_{1\eta}$ for the previous functions. 
 
We remark that $H_{0\eta}$ (respectively, $H_{1\eta}$) is a Casimir function for the Poisson bracket $\{\cdot, \cdot\}_{1\eta}$ (respectively, $\{\cdot, \cdot\}_{0\eta}$) on $G_\eta \times G_\eta$. Furthermore, we have that the multiplication $\cdot_{\eta}: G_\eta \times G_\eta \to G_\eta$ is a Poisson epimorphism between the Poisson manifolds $(G_\eta \times G_\eta, 
\{\cdot, \cdot\}_{0\eta} \oplus \{\cdot, \cdot\}_{0\eta})$ (respectively, $(G_\eta \times G_\eta, \{\cdot, \cdot\}_{1\eta} \oplus 
\{\cdot, \cdot\}_{1\eta})$) and $(G_\eta, \{\cdot, \cdot\}_{0\eta})$ (respectively, $(G_\eta, \{\cdot, \cdot\}_{1\eta})$). This implies the following result.
\begin{proposition}
The Hamiltonian systems $(\{\cdot, \cdot\}_{0\eta} \oplus \{\cdot, \cdot\}_{0\eta}, H_{0\eta} \circ \cdot_{\eta})$ and $(\{\cdot, \cdot\}_{1\eta} \oplus \{\cdot, \cdot\}_{1\eta}, H_{1\eta} \linebreak \circ \cdot_{\eta})$ on $G_\eta \times G_\eta$ admit reduction, via the multiplication $\cdot_{\eta}: G_\eta \times G_\eta \to G_\eta$, to the bi-Hamiltonian system $D_\eta$ on $G_\eta$.
\end{proposition}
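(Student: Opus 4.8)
The plan is to deduce the statement from a single general fact about Poisson maps: if $\Phi \colon (M, \Pi_M) \to (N, \Pi_N)$ is a Poisson map and $H \in C^\infty(N)$, then the Hamiltonian vector field $X_{H \circ \Phi}$ on $M$ is $\Phi$-related to the Hamiltonian vector field $X_H$ on $N$, i.e. $T_p\Phi\big(X_{H\circ\Phi}(p)\big) = X_H(\Phi(p))$ for every $p \in M$. First I would record a one-line proof of this. Since $d(H \circ \Phi) = \Phi^*(dH)$ by the chain rule, for an arbitrary covector $\beta$ at $\Phi(p)$ one has $\langle \beta, T_p\Phi(\Pi_M^\sharp(d(H\circ\Phi)))\rangle = \Pi_M(\Phi^*dH, \Phi^*\beta) = \Pi_N(dH, \beta)$ at $\Phi(p)$, where the last equality is exactly the defining property of a Poisson map; since $\beta$ is arbitrary, the asserted relatedness follows. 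In particular $\Phi$ carries integral curves of $X_{H\circ\Phi}$ to integral curves of $X_H$, which is precisely what ``admits reduction via $\Phi$'' is meant to express.

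Next I would apply this to $\Phi = \cdot_\eta$. The discussion preceding the proposition has already established that the multiplication $\cdot_\eta \colon (G_\eta \times G_\eta, \{\cdot,\cdot\}_{i\eta}\oplus\{\cdot,\cdot\}_{i\eta}) \to (G_\eta, \{\cdot,\cdot\}_{i\eta})$ is a Poisson epimorphism for each $i = 0, 1$, this being nothing but the multiplicativity of $\{\cdot,\cdot\}_{i\eta}$. Hence the general fact applies verbatim in both cases and yields that $X_{H_{i\eta}\circ\cdot_\eta}$ on $G_\eta\times G_\eta$ is $\cdot_\eta$-related to $X_{H_{i\eta}}$ on $G_\eta$, for $i = 0, 1$. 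Surjectivity of $\cdot_\eta$ guarantees that the reduced dynamics is defined on all of $G_\eta$, so nothing is lost under the projection.

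Finally I would invoke the bi-Hamiltonian structure of $D_\eta$. By the construction in Section \ref{bi-Ha-D-eta}, the two Hamiltonian systems $(\{\cdot,\cdot\}_{0\eta}, H_{0\eta})$ and $(\{\cdot,\cdot\}_{1\eta}, H_{1\eta})$ on $G_\eta$ define one and the same dynamics, that is $X_{H_{0\eta}} = X_{H_{1\eta}} = X_{D_\eta}$. Combining this with the relatedness of the previous step, both coupled Hamiltonian vector fields on $G_\eta \times G_\eta$ project under $\cdot_\eta$ onto the single vector field $X_{D_\eta}$, which is the assertion of the proposition.

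The main obstacle is bookkeeping rather than analysis: one must pin down that ``reduction via the multiplication'' means exactly the $\Phi$-relatedness above, and one must be careful with the contraction convention for $\Pi^\sharp$ so that the chain-rule identity $d(H\circ\Phi)=\Phi^*(dH)$ feeds correctly into the Poisson-map property. No genuinely hard step intervenes, since the only nontrivial ingredient—that $\cdot_\eta$ is a Poisson epimorphism for each structure—is furnished by the multiplicativity of the two Poisson brackets and is already available.
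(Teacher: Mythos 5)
Your proposal is correct and follows essentially the same route as the paper, which simply observes that $\cdot_\eta$ is a Poisson epimorphism from $(G_\eta\times G_\eta,\{\cdot,\cdot\}_{i\eta}\oplus\{\cdot,\cdot\}_{i\eta})$ onto $(G_\eta,\{\cdot,\cdot\}_{i\eta})$ and lets the standard fact that Hamiltonian vector fields of pulled-back Hamiltonians are $\Phi$-related do the rest. You merely make explicit the lemma on $\Phi$-relatedness and the use of $X_{H_{0\eta}}=X_{H_{1\eta}}=X_{D_\eta}$, both of which the paper leaves implicit.
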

From the previous result, the dynamical systems on $G_\eta \times G_\eta$ are said to be quasi bi-Hamiltonian systems.

We also remark that
\[
\lim_{\eta \rightarrow 0}(\{\cdot,\cdot \}
_{0\eta} \oplus \{\cdot,\cdot \}
_{0\eta})=\{\cdot,\cdot\}_0 \oplus \{\cdot,\cdot\}_0, \; \; \; \lim_{\eta \rightarrow 0}(\{\cdot,\cdot \}_{1\eta} \oplus \{\cdot,\cdot \}_{1\eta}=\{\cdot,\cdot\}_1 \oplus \{\cdot,\cdot\}_1
\]
and
\[
\lim_{\eta \rightarrow 0}(H
_{0\eta} \circ \cdot_{\eta})=H_0 \circ +, \; \; \; \lim_{\eta \rightarrow 0}(H
_{1\eta} \circ \cdot_{\eta}) =H_1 \circ +.
\]
Therefore, the Hamiltonian systems $(\{\cdot, \cdot\}_{i\eta} \oplus \{\cdot, \cdot\}_{i\eta}, H_{i\eta} \circ \cdot_{\eta})$, $i = \{0, 1\}$, may be considered as $\eta$-deformations of the quasi bi-Hamiltonian systems $(\{\cdot, \cdot\}_{i} \oplus \{\cdot, \cdot\}_{i}, H_{i} \circ +)$, $i = \{0, 1\}$, on $\mathbb{R}^n \times \mathbb{R}^n$. Note that these last systems admit reduction, via the sum $+: \mathbb{R}^n \times \mathbb{R}^n \to \mathbb{R}^n$, to the initial bi-Hamiltonian system on $\mathbb{R}^n$.

Moreover, we have that
\[
\{H_{1\eta} \circ \cdot_{\eta}, H_{0\eta} \circ \cdot_{\eta}, \varphi^k \circ \cdot_{\eta} \}_{ k= 1, \dots, r-1}
\]
are functionally independent first integrals of the Hamiltonian system $(\{\cdot, \cdot\}_{0\eta} \oplus \{\cdot, \cdot\}_{0\eta}, H_{0\eta} \circ \cdot_{\eta})$ that pairwise commute, and the same holds for 
\[
\{H_{0\eta} \circ \cdot_{\eta}, H_{1\eta} \circ \cdot_{\eta},  \varphi^k \circ \cdot_{\eta} \}_{k= 1, \dots, r-1}
\]
and the Hamiltonian system $(\{\cdot, \cdot\}_{1\eta} \oplus \{\cdot, \cdot\}_{1\eta}, H_{1\eta} \circ \cdot_{\eta})$. So, we conclude that
\begin{proposition}
If $r = 1$ the Hamiltonian systems $(\{\cdot, \cdot\}_{0\eta} \oplus \{\cdot, \cdot\}_{0\eta}, H_{0\eta} \circ \cdot_{\eta})$ and $(\{\cdot, \cdot\}_{1\eta} \oplus \{\cdot, \cdot\}_{1\eta}, H_{1\eta} \circ \cdot_{\eta})$ in $G_{\eta} \times G_{\eta}$ are completely integrable.
\end{proposition}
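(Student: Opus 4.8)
The plan is to verify the two defining conditions of complete integrability directly, relying on the properties of the candidate integrals already assembled just before the statement, with the dimension count being precisely what forces the hypothesis $r = 1$.

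First I would fix the bookkeeping. On $G_\eta \times G_\eta$, of dimension $2n$, the product Poisson structure $\{\cdot, \cdot\}_{0\eta} \oplus \{\cdot, \cdot\}_{0\eta}$ has rank $4r$ on a dense open subset, as recorded above; hence the definition of complete integrability demands $2r$ functionally independent, pairwise-commuting first integrals (the Hamiltonian together with $2r - 1$ further functions). For $r = 1$ this number is exactly $2$.

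Next I would exhibit the two candidate integrals furnished by the coupled construction: the Hamiltonian $H_{0\eta} \circ \cdot_\eta$ itself, and $H_{1\eta} \circ \cdot_\eta$. Because $H_{1\eta} \circ \cdot_\eta$ is a Casimir for $\{\cdot, \cdot\}_{0\eta} \oplus \{\cdot, \cdot\}_{0\eta}$, it Poisson-commutes with every smooth function, in particular with $H_{0\eta} \circ \cdot_\eta$; this single fact shows simultaneously that it is a first integral and that the two functions commute. The Hamiltonian is conserved along its own flow. The only nontrivial remaining point is functional independence, $d(H_{0\eta} \circ \cdot_\eta) \wedge d(H_{1\eta} \circ \cdot_\eta) \neq 0$ on a dense open subset, which is precisely the assertion established immediately before the proposition.

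The genuine content, and the reason the hypothesis $r = 1$ appears, is the matching of the two counts. The coupled construction supplies exactly $r + 1$ functionally independent, commuting first integrals --- namely $H_{0\eta} \circ \cdot_\eta$, $H_{1\eta} \circ \cdot_\eta$, and the $\varphi^k \circ \cdot_\eta$ for $k = 1, \dots, r - 1$ --- whereas complete integrability on a Poisson manifold of rank $4r$ requires $2r$ of them. These agree if and only if $r = 1$, so the argument closes there and not for $r > 1$; this matching, rather than any calculation, is the step I expect to be the real crux. Finally, the system governed by $\{\cdot, \cdot\}_{1\eta} \oplus \{\cdot, \cdot\}_{1\eta}$ is handled by the identical argument, with the roles of $H_{0\eta}$ and $H_{1\eta}$ (and of the corresponding Casimir relations) interchanged.
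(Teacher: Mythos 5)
Your proposal is correct and follows essentially the same route as the paper: the two commuting, functionally independent integrals $H_{0\eta}\circ\cdot_{\eta}$ and $H_{1\eta}\circ\cdot_{\eta}$ (commutation coming from the Casimir property of the coproduct of the ``other'' Hamiltonian, independence from the assertion preceding the proposition) are exactly the $2r=2$ functions required when the product Poisson structure has rank $4r=4$. Your added remark explaining why the count $r+1$ of available integrals matches the required $2r$ only when $r=1$ is a correct and slightly more explicit account of the role of the hypothesis than the paper gives.
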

\begin{proof}
It follows using that $\{H_{0\eta} \circ \cdot_{\eta}, H_{1\eta} \circ \cdot_{\eta}\}$ are functionally independent first integrals for both Hamiltonian systems and, in addition, they pairwise commute.
\end{proof}
We remark that in the two examples presented in this paper, $r=1$.


\subsection{Deformed coupled Lorenz systems}

Denote by $\Pi_{0\eta}$ and $\Pi_{1\eta}$ the multiplicative Poisson structures on the Lie group $G_\eta$ associated with the Poisson brackets $\{\cdot, \cdot\}_{0\eta}$ and $\{\cdot, \cdot\}_{1\eta}$ given by (\ref{0-eta-Poisson-bracket}) and (\ref{1-eta-Poisson-bracket}), respectively, and by $(y, z) = ((y_1, y_2, y_3, y_4), (z_1, z_2, z_3, z_4))$ the standard coordinates on $G_\eta \times G_\eta \simeq \mathbb{R}^4 \times \mathbb{R}^4$.

Then, we can consider two multiplicative Poisson structures on  $G_\eta \times G_\eta$:

\noindent $\bullet$
The product of $\Pi_{0\eta}$ with itself, that is, the Poisson bracket on $G_\eta \times G_\eta$, which we also denote by $\{\cdot, \cdot\}_{0\eta}$, defined by
\[
\{y_1,y_2\}_{0\eta} = -\frac{y_3}{2}, 
\qquad
\{y_1,y_3\}_{0\eta}=\frac{y_2}{2},
\qquad
\{z_1,z_2\}_{0\eta} = -\frac{z_3}{2},
\qquad
\{z_1,z_3\}_{0\eta} = \frac{z_2}{2},
\]
the rest of the Poisson brackets of the coordinate functions being zero and

\noindent $\bullet$
The product of $\Pi_{1\eta}$ with itself, that is, the Poisson bracket on $G_\eta \times G_\eta$, which we also denote by $\{\cdot, \cdot\}_{1\eta}$, given by the non-vanishing Poisson brackets
\[
\begin{array}{rclcrclcrclc}
\{y_1,y_2\}_{1\eta}&=&\displaystyle \frac{\sin(\eta y_4)}{4\eta},& \{y_1,y_3\}_{1\eta}&=&\displaystyle \frac{\cos(\eta y_4) -1}{4\eta},&
\{y_2,y_3\}_{1\eta}&=& \displaystyle -\frac{y_1}{2}, \\[8pt]
\{z_1,z_2\}_{1\eta}&=&\displaystyle \frac{\sin(\eta z_4)}{4\eta},&
\{z_1,z_3\}_{1\eta}&=& \displaystyle \frac{\cos(\eta z_4) - 1}{4\eta},& 
\{z_2,z_3\}_{1\eta}&=& \displaystyle -\frac{z_1}{2}.
\end{array}
\]

If $pr_i: G_\eta \times G_\eta \to G_\eta$, with $i \in \{1, 2\}$, are the canonical projections then the Casimir functions for the Poisson brackets $\{\cdot, \cdot\}_{0\eta}$ and $\{\cdot, \cdot\}_{1\eta}$ are
\[
{\mathcal C}_{0\eta} \circ pr_1 = y_2^{2} + y_3^2, \qquad {\mathcal C}_{0\eta} \circ pr_2 = z_2^{2} + z_3^2, \qquad {\mathcal C}'_{0\eta} \circ pr_1 = y_4, \qquad {\mathcal C}'_{0\eta} \circ pr_2 = z_4,
\]
and  
\[
\begin{array}{rcl}
{\mathcal C}_{1\eta} \circ pr_1 &=& \displaystyle \frac{\sin(\eta y_4)}{\eta}y_3-\frac{\cos(\eta y_4)-1}{\eta}y_2-y_1^2, \\[8pt] 
{\mathcal C}_{1\eta} \circ pr_2 &= & \displaystyle \frac{\sin(\eta z_4)}{\eta}z_3-\frac{\cos(\eta z_4)-1}{\eta}z_2-z_1^2, \\[8pt] 
{\mathcal C}'_{1\eta} \circ pr_1 & = & y_4, \; \; \; \; \; {\mathcal C}'_{1\eta} \circ pr_2  =  z_4,
\end{array}
\]
respectively.

Now, we consider on $G_\eta \times G_\eta$ the coproduct of the Hamiltonian functions $H_{0\eta}$ and $H_{1\eta}$ on $G_\eta$, which we also denote by $H_{0\eta}$ and $H_{1\eta}$,
\[
\begin{array}{rcl}
H_{0\eta} : = {\mathcal C}_{1\eta} \circ \cdot_{\eta} & = & \displaystyle \left( \frac{1 - \cos(\eta(y_4 + z_4))}{\eta}\right) y_2 + \frac{\sin(\eta(y_4 + z_4))}{\eta} y_3 \\[8pt]
&& + \displaystyle \left(\frac{\cos(\eta y_4) - \cos (\eta z_4)}{\eta}\right) z_2 + \displaystyle  \left(\frac{\sin(\eta y_4) + \sin (\eta z_4)}{\eta}\right) z_3-(y_1+z_1)^2,
\end{array}
\]
and
\[
H_{1\eta} : = {\mathcal C}_{0\eta} \circ \cdot_{\eta} = y_2^2 + y_3^2 + z_2^2 + z_3^2 + 2(y_2 z_2 + y_3 z_3) \cos(\eta y_4)
+ 2(y_2 z_3 - y_3 z_2) \sin(\eta y_4). 
\]

The Hamiltonian system $(G_\eta \times G_\eta, \{\cdot, \cdot\}_{0\eta}, H_{0\eta})$ can be straightforwardly computed and reads
\[
\begin{array}{rcl}
\dot{y}_1 &=& \displaystyle \left(\frac{\sin(\eta(y_4 + z_4))}{\eta}\right) \frac{y_2}{2} + \displaystyle \left(\frac{\cos(\eta(y_4 + z_4)) - 1}{\eta}\right) \frac{y_3}{2}, \\[8pt]
\dot{y}_2 & = & -y_3 (y_1 + z_1), \; \; \; \; \; \dot{y}_3  =  y_2 (y_1 + z_1),  \; \; \; \; \; \dot{y}_4  =  0,\\[8pt]
\dot{z}_1 & = & \displaystyle \left(\frac{\sin(\eta y_4) + \sin (\eta z_4)}{2\eta}\right) z_2 + \displaystyle \left(\frac{\cos(\eta z_4) - \cos (\eta y_4)}{2\eta}\right) z_3, \\[8pt]
\dot{z}_2 & = & -z_3 (y_1 + z_1), \; \; \; \; \; \dot{z}_3  =  z_2 (y_1 + z_1),  \; \; \; \; \; \dot{z}_4  =  0,
\end{array}
\]
while $(G_\eta \times G_\eta, \{\cdot, \cdot\}_{1\eta}, H_{1\eta})$ gives rise to the dynamical system
\[
\begin{array}{rcl}
\dot{y}_1 &=& \displaystyle \left(\frac{\sin(\eta y_4)}{2\eta}\right) y_2 + \displaystyle \left(\frac{\cos (\eta y_4) -1}{2\eta}\right) y_3 + \displaystyle \left(\frac{\sin(\eta y_4)}{2\eta}\right) z_2 - \displaystyle \left(\frac{\cos (\eta y_4) -1}{2\eta}\right) z_3, \\[5pt]
 \dot{y}_2 & = & -y_1 (y_3 - z_2 \sin(\eta y_4) + z_3 \cos(\eta y_4)), \\[5pt] 
 \dot{y}_3 & = & y_1 (y_2 + z_2 \cos(\eta y_4) + z_3 \sin (\eta y_4)), \\[5pt]
 \dot{y}_4 & = & 0,\\[8pt]
\dot{z}_1 & = &\displaystyle \left(\frac{\sin(\eta (y_4 + z_4)) - \sin (\eta y_4)}{2\eta}\right) y_2 + \displaystyle \left(\frac{\cos(\eta (y_4 + z_4)) - \cos (\eta y_4)}{2\eta}\right) y_3 \\[8pt]
&& + \displaystyle \left(\frac{\sin(\eta z_4)}{2 \eta}\right) z_2 + \displaystyle \left(\frac{\cos(\eta z_4) -1}{2 \eta}\right) z_3 \\[8pt]
\dot{z}_2 & = & -z_1 (\sin(\eta y_4) y_2 + \cos(\eta y_4) y_3 + z_3), \\[5pt] 
\dot{z}_3 & = & z_1 (\cos(\eta y_4) y_2 - \sin(\eta y_4) y_3 + z_2), \\[5pt] 
\dot{z}_4 & = & 0.
\end{array}
\]

As we know, both systems are completely integrable and they admit reduction, via the multiplication $\cdot_\eta : G_\eta \times G_\eta \to G_\eta$, to the bi-Hamiltonian system $D_{\eta}$ on $G_\eta$ in Section \ref{bi-Ha-D-eta}. This last result becomes apparent if we consider the new coordinates $(x, z) = ((x_1, x_2, x_3, \linebreak x_4), (z_1, z_2, z_3, z_4))$ on $G_\eta \times G_\eta$, with $x = y \cdot_{\eta}  z$.

It is straightforward to prove that in these new coordinates, we have the following expressions 
for the Poisson structures $\Pi_{0\eta}$ and $\Pi_{1\eta}$ on $G_\eta \times G_\eta$:
\[
\begin{array}{rcl}
\Pi_{0\eta} (x, z) & = & \displaystyle -\frac{x_3}{2} \frac{\partial}{\partial x_1} \wedge \frac{\partial}{\partial x_2}  \displaystyle +\frac{x_2}{2} \frac{\partial}{\partial x_1} \wedge \frac{\partial}{\partial x_3} \displaystyle -\frac{z_3}{2} \frac{\partial}{\partial x_1} \wedge \frac{\partial}{\partial z_2} \displaystyle +\frac{z_2}{2} \frac{\partial}{\partial x_1} \wedge \frac{\partial}{\partial z_3} \\[8pt]
&& + \displaystyle \left(\cos(\eta(x_4 - z_4))\frac{z_3}{2} - \sin(\eta(x_4 - z_4)) \frac{z_2}{2}\right) \frac{\partial}{\partial x_2} \wedge \frac{\partial}{\partial z_1}  \displaystyle +\frac{z_2}{2} \frac{\partial}{\partial z_1} \wedge \frac{\partial}{\partial z_3} \\[8pt]
&& \displaystyle - \left(\cos(\eta(x_4 - z_4))\frac{z_2}{2} + \sin(\eta(x_4 - z_4)) \frac{z_3}{2}\right) \frac{\partial}{\partial x_3} \wedge \frac{\partial}{\partial z_1}  \displaystyle -\frac{z_3}{2} \frac{\partial}{\partial z_1} \wedge \frac{\partial}{\partial z_2},
\end{array}
\]
\[
\begin{array}{rcl}
\Pi_{1\eta}(x,z)&=&\displaystyle \frac{\sin(\eta x_4)}{4\eta} \frac{\partial}{\partial x_1}\wedge \frac{\partial}{\partial x_2}
+ \displaystyle \left(\frac{\cos(\eta x_4) -1}{4\eta}\right) \frac{\partial}{\partial x_1}\wedge \frac{\partial}{\partial x_3} -\displaystyle \frac{x_1}{2}\frac{\partial}{\partial x_2}\wedge \frac{\partial}{\partial x_3} \\[8pt] 
&& + \displaystyle \frac{\sin(\eta z_4)}{4\eta} \frac{\partial}{\partial x_1}\wedge \frac{\partial}{\partial z_2} + \displaystyle \left(\frac{\cos(\eta z_4) -1}{4 \eta}\right) \frac{\partial}{\partial x_1}\wedge \frac{\partial}{\partial z_3} -\frac{z_1}{2}\frac{\partial}{\partial z_2}\wedge \frac{\partial}{\partial z_3}
\\[8pt]
&&+\displaystyle \left(\frac{\sin(\eta(x_4 - z_4)) - \sin(\eta x_4)}{4 \eta}\right)\frac{\partial}{\partial x_2}\wedge \frac{\partial}{\partial z_1} + \displaystyle \frac{\sin(\eta z_4)}{4\eta} \frac{\partial}{\partial z_1}\wedge \frac{\partial}{\partial z_2} \\[8pt]
&& + \displaystyle \frac{z_1}{2} \sin(\eta(x_4 - z_4)) \frac{\partial}{\partial x_2}\wedge \frac{\partial}{\partial z_2} - \displaystyle \frac{z_1}{2} \cos(\eta(x_4 - z_4)) \frac{\partial}{\partial x_2}\wedge \frac{\partial}{\partial z_3} \\[8pt]
&& + \displaystyle \left(\frac{\cos(\eta(x_4 - z_4)) - \cos(\eta x_4)}{4 \eta}\right) \frac{\partial}{\partial x_3}\wedge \frac{\partial}{\partial z_1} + \displaystyle \frac{z_1}{2} \cos(\eta(x_4 - z_4)) \frac{\partial}{\partial x_3}\wedge \frac{\partial}{\partial z_2} \\[8pt]
&& + \displaystyle \frac{z_1}{2} \sin(\eta(x_4 - z_4)) \frac{\partial}{\partial x_3}\wedge \frac{\partial}{\partial z_3} + \displaystyle \left(\frac{\cos(\eta z_4) -1}{4\eta}\right) \frac{\partial}{\partial z_1}\wedge \frac{\partial}{\partial z_3}. 
\end{array}
\]

The Casimir functions of the Poisson bracket $\{\cdot, \cdot\}_{0\eta}$ are
\[
\begin{array}{rcl}
{\mathcal C}_{0\eta} \circ pr_1 & = & x_2^2 + x_3^2 + z_2^2 + z_3^2 - 2(x_2 z_2 + x_3 z_3) \cos(\eta(x_4 - z_4)) \\[6pt] 
&& + 2(x_3 z_2 - x_2 z_3) \sin(\eta(x_4 - z_4)), \\[6pt] 
{\mathcal C}_{0\eta} \circ pr_2 & = & z_2^2 + z_3^2, \; \; \; \; {\mathcal C}'_{0\eta} \circ pr_1 = x_4 - z_4, \; \; \; \;  {\mathcal C}'_{0\eta} \circ pr_2 = z_4,
\end{array}
\]
and for $\{\cdot, \cdot\}_{1\eta}$ we have
\[
\begin{array}{rcl}
{\mathcal C}_{1\eta} \circ pr_1 &=& \displaystyle \frac{\sin(\eta (x_4 - z_4))}{\eta}(x_3 -z_3) - \left(\frac{\cos(\eta (x_4 - z_4)-1}{\eta}\right) (x_2 + z_2) - (x_1 - z_1)^2, \\[8pt] 
{\mathcal C}_{1\eta} \circ pr_2 &= & \displaystyle \frac{\sin(\eta z_4)}{\eta}z_3-\left(\frac{\cos(\eta z_4)-1}{\eta}\right)z_2-z_1^2, \\[8pt] 
{\mathcal C}'_{1\eta} \circ pr_1 & = & x_4 - z_4, \; \; \; \; \; {\mathcal C}'_{1\eta} \circ pr_2  =  z_4.
\end{array}
\]
On the other hand, the Hamiltonian functions $H_{0\eta}$ and $H_{1\eta}$ read
\[
H_{0\eta} = \displaystyle \frac{\sin (\eta x_4)}{\eta} x_3 - \left(\frac{\cos(\eta x_4) -1}{\eta}\right)x_2 - x_1^2, \qquad H_{1\eta} = x_2^2 + x_3^2.
\]

In these coordinates, the two completely integrable Hamiltonian systems $(G_\eta \times G_\eta, \{\cdot, \cdot\}_{0\eta}, H_{0\eta})$ and $(G_\eta \times G_\eta, \{\cdot, \cdot\}_{1\eta}, H_{1\eta})$ become
\begin{equation}\label{coupledH_0-eta}
\begin{array}{rcl}
\dot{x_1}&=& \displaystyle \frac{1}{2} \frac{\sin(\eta x_4)}{\eta}x_2 + \left(\frac{\cos(\eta x_4) -1}{2\eta}\right) x_3, \; \; \; \dot{x_2}= -x_1x_3, \; \; \;  \dot{x_3} =  x_1x_2, \; \; \; \dot{x_4} =  0,   \\[8pt]
\dot{z_1}&=& \displaystyle \left( \frac{\sin(\eta(x_4 - z_4)) + \sin (\eta z_4)}{2\eta}\right) z_2 + \displaystyle \left( \frac{\cos(\eta(z_4)) - \cos (\eta (x_4 - z_4)}{2\eta}\right) z_3, \\[8pt] 
\dot{z_2}&=& - x_1z_3, \; \; \;  \dot{z_3} =  x_1z_2, \; \; \; \dot{z_4} = 0,
\end{array}
\end{equation}
and
\begin{equation}\label{coupledH_1-eta}
\begin{array}{rcl}
\dot{x_1}&=& \displaystyle \frac{1}{2} \frac{\sin(\eta x_4)}{\eta}x_2 + \left(\frac{\cos(\eta x_4) -1}{2\eta}\right) x_3, \; \; \; \dot{x_2}= -x_1x_3, \; \; \;  \dot{x_3} =  x_1x_2, \; \; \; \dot{x_4} =  0,    \\[8pt]
\dot{z_1}&=& \displaystyle \left( \sin (\eta x_4) - \frac{\sin(\eta(x_4 - z_4))}{2\eta}\right) x_2 + \displaystyle \left( \frac{\cos(\eta(x_4)) - \cos (\eta (x_4 - z_4)}{2\eta}\right) x_3, \\[8pt]
\dot{z_2} & = & -z_1(x_2 \sin(\eta(x_4 - z_4)) + x_3 \cos(\eta (x_4 - z_4))), \\[6pt]
\dot{z_3} & = & z_1(x_2 \cos(\eta(x_4 - z_4)) - x_3 \sin(\eta (x_4 - z_4))), \\[6pt]
\dot{z_4} & = & 0, 
\end{array}
\end{equation}
respectively.
Again, the multiplication $\cdot_{\eta} : G_\eta \times G_\eta \to G_\eta$ leads to the projection
\begin{equation}\label{multiplication}
\cdot_{\eta}((x_1, x_2, x_3, x_4), (z_1, z_2, z_3, z_4)) = (x_1, x_2, x_3, x_4).
\end{equation}
So, by recalling (\ref{bi-Hamil-non-abelian}), (\ref{coupledH_0-eta}), (\ref{coupledH_1-eta}) and (\ref{multiplication}), we directly deduce that the two completely integrable Hamiltonian systems $(G_\eta \times G_\eta, \{\cdot, \cdot\}_{0\eta}, H_{0\eta})$ and $(G_\eta \times G_\eta, \{\cdot, \cdot\}_{1\eta}, H_{1\eta})$ admit reduction, via the multiplication $\cdot_{\eta} : G_\eta \times G_\eta \to G_\eta$, to the bi-Hamiltonian system $D_{\eta}$ on $G_\eta$ considered in Section \ref{bi-Ha-D-eta}.


\section{Another example: an Euler top system}\label{Euler-top-system}
In this section, we will discuss another example: an Euler top system. We will follow the same steps as in the Lorenz system.  So, first of all, we will present the dynamical system and its bi-Hamiltonian structure.

\subsection{The system $D$ and its bi-Hamiltonian structure}\label{D-bi-Ha-st}

We consider the following completely integrable system $D$ on $\mathbb{R}^3$ 
\begin{equation}\label{system 3}
\begin{array}{rcl}
\dot{x_1}&=&x_2^2-x_3^2,\\
\dot{x_2}&=& x_1(2x_3-x_2),\\
\dot{x_3}&=&x_1(x_3-2x_2).
\end{array}
\end{equation}
This system is equivalent to a particular case of the ${so}(3)$ Euler top, which is a well-known three dimensional bi-Hamiltonian system (see \cite{GuNu}) belonging to the realm of classical mechanics \cite{Rey}.

In fact, the previous system is bi-Hamiltonian with respect to the Lie-Poisson structures $\{\cdot, \cdot \}_{0}$ and $\{\cdot, \cdot \}_{1}$ in $\mathbb{R}^3$ which are characterized by 
\begin{equation}\label{Poissen-algebra-Euler}
\begin{array}{rclcrclcrcl}
\{x_1,x_2\}_{0} &= & \displaystyle -x_3,\quad & \{x_1,x_3\}_{0} &= & \displaystyle  x_2,\quad & \{x_2,x_3\}_0& = &-x_1,\\[5pt]
\{x_1,x_2\}_{1}& = & \displaystyle -x_2,\quad & \{x_1,x_3\}_{1} & = &\displaystyle x_3,\quad & \{x_2,x_3\}_1 &= &-2x_1.
\end{array}
\end{equation}
The Casimirs for these Poisson structures are 
\[
 \mathcal C_{0}=-\frac{1}{2}(x_1^2+x_2^2+x_3^2)\quad  \mbox{ and } \quad  \mathcal C_{1}= x_1^2+x_2x_3.
\]
It is straightforward to prove that the Hamiltonian systems $(\{\cdot, \cdot\}_0, H_0: = {\mathcal C}_1)$ and $(\{\cdot, \cdot\}_1, H_1 : = {\mathcal C}_0)$ coincide with the system $D$. 

The real Lie algebras corresponding to $\{.,.\}_0$ and $\{.,.\}_1$ are $so(3)$ and $sl(2;R)$, respectively. The structure equations for $so(3)$ and $sl(2, \mathbb{R})$ are
\[
[X_1,X_2]_{0}=-X_3,\quad [X_1,X_3]_{0}=X_2,\quad [X_2,X_3]_{0}=-X_1,
\]
\[
[X_1,X_2]_{1}=-X_2,\quad [X_1,X_3]_{1}=X_3,\quad [X_2,X_3]_{1}=-2X_1.
\]
We consider the family of compatible Lie-Poisson structures: $$\quad\{\cdot,\cdot\}_{\alpha}=(1-\alpha)\{\cdot,\cdot\}_{0}+\alpha\{\cdot,\cdot\}_{1},\quad \mbox{ with } \alpha \in \mathbb{R}$$ 
which are characterized by
\begin{equation}
\begin{array}{rclcrcl}\label{linear.p}
\{x_1,x_2\}_{\alpha}&=&(\alpha-1)x_3-\alpha x_2,\quad \{x_1,x_3\}_{\alpha}&=&(1-\alpha)x_2+\alpha x_3,\\ \{x_2,x_3\}_{\alpha}&=&-(1+\alpha)x_1.
\end{array}
\end{equation}
If $\{X_1, X_2, X_3\}$ is the canonical basis of $\mathbb{R}^3$ then the corresponding Lie bracket $[\cdot, \cdot]_{\alpha}$ on $\mathbb{R}^3$ is given by
\[
\begin{array}{rclcrcl}
[X_1,X_2]_{\alpha}&=&(\alpha-1)X_3-\alpha X_2,\quad [X_1,X_3]{\alpha}&=&(1-\alpha)X_2+\alpha X_3,\\
\left[ X_1,X_3\right] _{\alpha}&=&-(1+\alpha)X_1.
\end{array}
\]
So, we have a family of Lie algebras $({\frak g}_{\alpha}, [\cdot, \cdot]_{\alpha})$.

\subsection{Construction of the bi-Hamiltonian system $D_{\eta}$}\label{Con-bi-Ha-sy}
First of all, we will consider a family of non-trivial admissible $1$-cocycles for the previous Lie algebras $({\frak g}_{\alpha}, [\cdot, \cdot]_{\alpha})$ given by:
\[
\psi_\eta(X_1)=0,  \; \;
\psi_\eta(X_2)=\eta X_2\wedge X_1,  \; \;
\psi_\eta(X_3)=\eta X_3\wedge X_1.
\]

Therefore, we have an $\eta$-parametric family of Lie bialgebras $(\mathfrak{g}_\alpha,\psi_{\eta})$. The Lie bracket $[\cdot, \cdot]_{\eta}^*$ on $(\mathbb{R}^3)^* \simeq \mathbb{R}^3$ obtained from the dual cocommutator map is:
\[
[X^1,X^2]_{\eta}^*=-\eta X^2,\quad [X^1,X^3]_{\eta}^*=-\eta X^3, \quad
[X^2,X^3]_{\eta}^*=0.
\]
So, $(\mathbb{R}^3, [\cdot, \cdot]_\eta^*)$ is just the so-called book Lie algebra.

Now, let $G_{\eta}$ be a connected simply-connected Lie group with Lie algebra $(\mathbb{R}^3, [\cdot, \cdot]_{\eta}^*)$. Then, one may prove that $G_{\eta}$ is diffeomorphic to $\mathbb{R}^3$ and the multiplication of two elements $g = (x_1, x_2, x_3)$ and $g'=(x_1', x_2', x_3')$ of $\mathbb{R}^3$ is given by 
$$
g._\eta g' =
( x_1+x'_1,x_2+x'_2e^{-\eta x_1},x_3+x'_3e^{-\eta x_1}).
$$
A basis $\{\lvec X^1,\lvec X^2,\lvec X^3\}$ (resp., $\{\rvec X^1,\rvec X^2,\rvec X^3\}$) of left-invariant (resp., right-invariant) vector fields is
\[
\{\frac{\partial}{\partial x_1}, e^{-\eta x_1}\frac{\partial}{\partial x_2},e^{-\eta x_1}\frac{\partial}{\partial x_3}\}
\]
(resp., $\displaystyle \{\frac{\partial}{\partial x_1}-\eta x_2  \frac{\partial}{\partial x_2}-\eta x_3  \frac{\partial}{\partial x_3}, \frac{\partial}{\partial x_2}, \frac{\partial}{\partial x_3}\}$).

The adjoint action $Ad:G_{\eta}\times{\mathfrak {g}}_{\eta} \longrightarrow {\mathfrak {g}}_{\eta}$ for Lie group $G_{\eta}$ is as follows:
\[
Ad_{g}(X^1)= \eta(x_2X^2+x_3X^3)+X^1,\quad
Ad_{g}(X^2)= e^{-\eta x_1}X^2, \quad
Ad_{g}(X^3)=  e^{-\eta x_1} X^3.
\]
Next, as in the case of the Lorenz system, we will introduce a Poisson-Lie group structure on $G_\eta$.

As we know, a family of non-trivial admissible 1-cocycles for the Lie algebra $(\mathbb{R}^3, [\cdot, \cdot]_{\eta}^*)$ is 
\[
\begin{array}{rclcrcl}
\psi_{\alpha}(X^1) & = &-(1+\alpha)X^1\wedge X^3,\ &  \psi_{\alpha}(X^2)& = & -\alpha X^1\wedge X^2+(1-\alpha)X^1\wedge X^3, \\[7pt]
 \psi_{\alpha}(X^3) &= & (\alpha-1)X^1\wedge X^2+\alpha X^1\wedge X^3.\\
\end{array}
\]
Applying the same process as in the previous example (see Section \ref{Po-Li-gr-st}), 
we deduce that the corresponding compatible
multiplicative Poisson brackets $\{\cdot, \cdot \}_{\alpha \eta}$ on $G_{\eta}$ are characterized by 
\[
\begin{array}{rcl}
\{x_1,x_2\}_{\alpha \eta}& = & \displaystyle (\alpha -1)x_3-\alpha x_2,\\[5pt]
 \{x_1,x_3\}_{\alpha \eta}& = &(1-\alpha)x_2+\alpha x_3,\\[5pt]
 \{x_2,x_3\}_{\alpha \eta}& = & \eta(\alpha-1)\displaystyle (\frac{x_2^2}{2}+\frac{x_3^2}{2})-\eta \alpha x_2x_3+(1+\alpha)\frac{e^{-2\eta x_1}-1}{2\eta}.
\end{array}
\]
As expected, $\lim _{\eta\rightarrow 0}\{.,.\}_{\alpha \eta}=\{.,.\}_{\alpha}$ and this means that we have a $\eta$-deformation of the Lie-Poisson bracket (\ref{linear.p}).

In the cases when $\alpha=0$ and $\alpha=1$, the $\eta$-deformations $\{\cdot, \cdot\}_{0\eta}$ and $\{\cdot, \cdot\}_{1\eta}$ of the Lie-Poisson brackets $\{\cdot, \cdot\}_0$ and $\{\cdot, \cdot\}_1$ have the form
\begin{equation}\label{Poisson-Lie0-Euler}
\begin{array}{rclcrcl}
\{x_1,x_2\}_{0\eta}  &= & -x_3,&
\{x_1,x_3\}_{0\eta}& =& x_2, \\
 \{x_2,x_3\}_{0\eta}& =&\displaystyle -\eta(\frac{x_2^2}{2}+\frac{x_3^2}{2})+ \displaystyle\frac{e^{-2\eta x_1}-1}{2\eta} .\\
 \end{array}
 \end{equation}
 \begin{equation}\label{Poisson-Lie1-Euler}
 \begin{array}{rclcrcl}
\{x_1,x_2\}_{1\eta}&= &-x_2, &
\{x_1,x_3\}_{1 \eta}& = & x_3,\\
\{x_2,x_3\}_{1 \eta}& = &- \eta  x_2x_3+ \displaystyle\frac{e^{-2\eta x_1}-1}{\eta}.
\end{array}
 \end{equation}
Casimir functions for the previous two multiplicative Poisson structures are:
 \[
 \begin{array}{rcl}
  \mathcal C_{0\eta}&=& \displaystyle -e^{\eta x_1} \frac{x_2^2+x_3^2}{2}-\displaystyle \frac{e^{\eta x_1}+e^{-\eta x_1}-2}{2\eta^2}, \; \;\\
\mathcal C_{1\eta}&=& e^{\eta x_1}(x_2x_3)+\displaystyle \frac{e^{\eta x_1}+e^{-\eta x_1}-2}{\eta^2}.
\end{array}
 \]
 It is clear that $\lim_{\eta \rightarrow 0}\mathcal C_{0\eta}=\mathcal C_{0}$ and $\lim_{\eta \rightarrow 0}\mathcal C_{1\eta}=\mathcal C_{1}$. If we denote the Casimir functions $\mathcal C_{0\eta}$ and $\mathcal C_{1\eta}$ by $H_{1\eta}$ and $H_{0\eta}$, respectively, then the dynamical systems associated with the Hamiltonian systems $(\mathbb{R}^3, \{\cdot, \cdot \}_{0\eta}, H_{0\eta})$ and  $(\mathbb{R}^3, \{\cdot, \cdot \}_{1\eta}, H_{1\eta})$ coincide. In other words, the dynamical system on the Lie group $G_{\eta}$
  \begin{equation}\label{Euler.deformation}
\begin{array}{rcl}
\dot{x_1}&=& e^{\eta x_1}(x_2^2-x_3^2),\\
\dot{x_2}&=& \eta  e^{\eta x_1} x_2  x_3^2-\displaystyle\frac{1}{2}\eta e^{\eta x_1}(x_2^3-x_2 x_3^2)+\displaystyle  \frac{e^{\eta x_1}-e^{-\eta x_1}}{2 \eta}(2 x_3-x_2),\\
 \dot{x_3}&=& -\eta  e^{\eta x_1} x_2^2  x_3+\displaystyle\frac{1}{2}\eta e^{\eta x_1}(x_2^2 x_3 +x_3^3)+\displaystyle  \frac{e^{\eta x_1}-e^{-\eta x_1}}{2 \eta}(x_3-2 x_2).
\end{array}
 \end{equation}
is bi-Hamiltonian with respect to the compatible multiplicative Poisson structures $\{\cdot, \cdot\}_{0\eta}$ and $\{\cdot, \cdot\}_{1\eta}$.

From the previous considerations, we also deduce that the bi-Hamiltonian system is completely integrable.

Finally, as we expected, the limit when $\eta$ approaches to zero of (\ref{Euler.deformation}) is just the bi-Hamiltonian system (\ref{system 3}).  

\subsection{Deformed coupled Euler top systems}
Denote by $\Pi_{0\eta}$ and $\Pi_{1\eta}$ the multiplicative Poisson structures on the Lie group $G_\eta$ associated with the Poisson brackets $\{\cdot, \cdot\}_{0\eta}$ and $\{\cdot, \cdot\}_{1\eta}$ given by (\ref{Poisson-Lie0-Euler}) and (\ref{Poisson-Lie1-Euler}), respectively, and by $(y, z) = ((y_1, y_2, y_3), (z_1, z_2, z_3))$ the standard coordinates on $G_\eta \times G_\eta \simeq \mathbb{R}^3 \times \mathbb{R}^3$.

Then, we can consider the multiplicative Poisson structures on  $G_\eta \times G_\eta$:

\noindent $\bullet$
The product of $\Pi_{0\eta}$ with itself, that is, the Poisson bracket on $G_\eta \times G_\eta$, which we also denote by $\{\cdot, \cdot\}_{0\eta}$, characterized by
\[
\begin{array}{rclcrcl}
\{y_1,y_2\}_{0\eta} & = & -y_3, &
\{y_1,y_3\}_{0\eta} & = & y_2, \\
 \{y_2,y_3\}_{0\eta} & = & \displaystyle -\eta(\frac{y_2^2}{2}+\frac{y_3^2}{2})+ \displaystyle\frac{e^{-2\eta y_1}-1}{2\eta} , \\ \{z_1,z_2\}_{0\eta} & = &-z_3,&  \{z_1,z_3\}_{0\eta} & = & z_2,\\ \{z_2,z_3\}_{0\eta} & = &\displaystyle -\eta(\frac{z_2^2}{2}+\frac{z_3^2}{2})+ \displaystyle\frac{e^{-2\eta z_1}-1}{2\eta}. \\
\end{array}
\]

\noindent $\bullet$
The product of $\Pi_{1\eta}$ with itself, that is, the Poisson bracket on $G_\eta \times G_\eta$, which we also denote by $\{\cdot, \cdot\}_{1\eta}$, characterized by
\[
\begin{array}{rclcrcl}
\{y_1,y_2\}_{1\eta} & = & -y_2, &
\{y_1,y_3\}_{1\eta} & = & y_3, \\
 \{y_2,y_3\}_{1\eta} & = & - \eta  y_2y_3+ \displaystyle\frac{e^{-2\eta y_1}-1}{\eta},&
\\
 \{z_1,z_2\}_{1\eta} & = & -z_2,&    \{z_1,z_3\}_{1\eta} & = & z_3,
   \\
   \{z_2,z_3\}_{1\eta} & = & - \eta  z_2z_3+ \displaystyle\frac{e^{-2\eta z_1}-1}{\eta}.\\
\end{array}
\]

If $pr_i: G_\eta \times G_\eta \to G_\eta$, with $i \in \{1, 2\}$, are the canonical projections then the Casimir functions for the Poisson brackets $\{\cdot, \cdot\}_{0\eta}$ and $\{\cdot, \cdot\}_{1\eta}$ are

\[
\begin{array}{rcl}
{\mathcal C}_{0\eta} \circ pr_1 &=&  -\displaystyle e^{\eta y_1} \frac{y_2^2+y_3^2}{2}-\displaystyle \frac{e^{\eta y_1}+e^{-\eta y_1}-2}{2\eta^2}, \\[8pt] 
{\mathcal C}_{0\eta} \circ pr_2 &= & - \displaystyle e^{\eta z_1} \frac{z_2^2+z_3^2}{2}-\displaystyle \frac{e^{\eta z_1}+e^{-\eta z_1}-2}{2\eta^2}, \\
\end{array}
\]
and  
\[
\begin{array}{rcl}
{\mathcal C}_{1\eta} \circ pr_1 &=& e^{\eta y_1}(y_2y_3)+\displaystyle \frac{e^{\eta y_1}+e^{-\eta y_1}-2}{\eta^2}, \\[8pt] 
{\mathcal C}_{1\eta} \circ pr_2 &= & e^{\eta z_1}(z_2z_3)+\displaystyle \frac{e^{\eta z_1}+e^{-\eta z_1}-2}{\eta^2}, \\
\end{array}
\]
respectively.

Now, we consider on $G_\eta \times G_\eta$ the coproduct of the Hamiltonian functions $H_{0\eta}$ and $H_{1\eta}$ on $G_\eta$, which we also denote by $H_{0\eta}$ and $H_{1\eta}$,
\[
H_{0\eta} : = {\mathcal C}_{1\eta} \circ \cdot_{\eta} =e^{\eta z_1}(y_2z_3+y_3z_2)+e^{\eta(y_1+z_1)} y_2y_3+e^{\eta(z_1-y_1)}z_2z_3 +\frac{e^{\eta (y_1+z_1)}+e^{-\eta (y_1+z_1)}-2}{\eta^2},
\]
and
\[
\begin{array}{rcl}
H_{1\eta} : = {\mathcal C}_{0\eta} \circ \cdot_{\eta}&=&\displaystyle- e^{\eta z_1}(y_2z_2+y_3z_3)- e^{\eta (y_1+z_1)}\left( \frac{y_2^2}{2}+\frac {y_3^2}{2}\right)- e^{\eta (z_1-y_1)}\left( \frac{z_2^2}{2}+\frac {z_3^2}{2}\right)\\[8pt]
&&-\displaystyle \frac{e^{\eta (y_1+z_1)}+e^{-\eta (y_1+z_1)}-2}{2\eta^2}.
\end{array} 
\]
Then, the Hamiltonian systems $(G_\eta \times G_\eta, \{\cdot, \cdot\}_{0\eta}, H_{0\eta})$ and $(G_\eta \times G_\eta, \{\cdot, \cdot\}_{1\eta}, H_{1\eta})$ are given by
\[
\begin{array}{rcl}
\dot{y}_1 &=& \displaystyle-y_3\left (y_3e^{\eta(y_1+z_1)}+z_3e^{\eta z_1}\right )+y_2\left(y_2e^{\eta(y_1+z_1)}+z_2e^{\eta z_1}\right),\\[6pt]
\dot{y}_2 & = &\displaystyle y_3\left( \eta y_2y_3e^{\eta(y_1+z_1)}-\eta z_2z_3e^{\eta(z_1-y_1)}+\frac{e^{\eta(y_1+z_1)}-e^{-\eta(y_1+z_1)}}{\eta}\right)\\[6pt]
&&+\displaystyle \left(y_2e^{\eta(y_1+z_1)}+z_2e^{\eta z_1}\right)\left( \frac{e^{-2\eta y_1}-1}{2\eta}-\eta(\frac{y_2^2}{2}+\frac{y_3^2}{2})\right),\\[6pt]
\dot{y}_3 & = &\displaystyle -y_2\left( \eta y_2y_3e^{\eta(y_1+z_1)}-\eta z_2z_3e^{\eta(z_1-y_1)}+\frac{e^{\eta(y_1+z_1)}-e^{-\eta(y_1+z_1)}}{\eta}\right)\\[6pt]
&&+\displaystyle \left(y_3e^{\eta(y_1+z_1)}+z_3e^{\eta z_1}\right)\left( \eta(\frac{y_2^2}{2}+\frac{y_3^2}{2})-\frac{e^{-2\eta y_1}-1}{2\eta}\right),\\[6pt]
\dot{z}_1 & = & \displaystyle-z_3\left (y_3e^{\eta z_1}+z_3e^{\eta (z_1-y_1)}\right )+z_2\left (y_2e^{\eta z_1}+z_2e^{\eta (z_1-y_1)}\right),\\[6pt]
\dot{z}_2 & = & \displaystyle z_3\left( \eta y_2y_3e^{\eta(y_1+z_1)}+\eta(y_2z_3+y_3z_2)e^{\eta z_1}+\eta z_2z_3e^{\eta(z_1-y_1)}+\frac{e^{\eta(y_1+z_1)}-e^{-\eta(y_1+z_1)}}{\eta}\right)\\[6pt]
&&+\displaystyle \left(y_2e^{\eta z_1}+z_2e^{\eta (z_1-y_1)}\right)\left( \frac{e^{-2\eta z_1}-1}{2\eta}-\eta(\frac{z_2^2}{2}+\frac{z_3^2}{2})\right),\\[6pt]
\dot{z}_3 & = & \displaystyle -z_2\left( \eta y_2y_3e^{\eta(y_1+z_1)}+\eta(y_2z_3+y_3z_2)e^{\eta z_1}+\eta z_2z_3e^{\eta(z_1-y_1)}+\frac{e^{\eta(y_1+z_1)}-e^{-\eta(y_1+z_1)}}{\eta}\right)\\[6pt]
&&+\displaystyle \left(y_3e^{\eta z_1}+z_3e^{\eta (z_1-y_1)}\right)\left( \eta(\frac{z_2^2}{2}+\frac{z_3^2}{2})-\frac{e^{-2\eta z_1}-1}{2\eta}\right),\\
\end{array}
\]
and 
\[
\begin{array}{rcl}
\dot{y}_1 &=& \displaystyle y_2\left (y_2e^{\eta(y_1+z_1)}+z_2e^{\eta z_1}\right )-y_3\left(y_3e^{\eta(y_1+z_1)}+z_3e^{\eta z_1}\right),\\[6pt]
 \dot{y}_2 & = & -\displaystyle y_2\left( \eta( \frac{y_2^2}{2}+\frac{y_3^2}{2})e^{\eta(y_1+z_1)}-\eta ( \frac{z_2^2}{2}+\frac{z_3^2}{2})e^{\eta(z_1-y_1)}+\frac{e^{\eta(y_1+z_1)}-e^{-\eta(y_1+z_1)}}{2\eta}\right)\\[6pt]
 &&-\displaystyle \left(y_3e^{\eta(y_1+z_1)}+z_3e^{\eta z_1}\right)\left( \frac{e^{-2\eta y_1}-1}{\eta}-\eta y_2y_3\right),\\[6pt]
 \dot{y}_3 & = & \displaystyle y_3\left( \eta( \frac{y_2^2}{2}+\frac{y_3^2}{2})e^{\eta(y_1+z_1)}-\eta ( \frac{z_2^2}{2}+\frac{z_3^2}{2})e^{\eta(z_1-y_1)}+\frac{e^{\eta(y_1+z_1)}-e^{-\eta(y_1+z_1)}}{2\eta}\right)\\[6pt]
  &&-\displaystyle \left(y_2e^{\eta(y_1+z_1)}+z_2e^{\eta z_1}\right)\left( \eta y_2y_3-\frac{e^{-2\eta y_1}-1}{\eta}\right),\\[6pt]
\dot{z}_1 & = &\displaystyle z_2\left (y_2e^{\eta z_1}+z_2e^{\eta(z_1-y_1)}\right )-z_3\left(y_3e^{\eta z_1}+z_3e^{\eta(z_1-y_1)}\right),\\[6pt]
\dot{z}_2 & = & \displaystyle- z_2\left( \eta( \frac{y_2^2}{2}+\frac{y_3^2}{2})e^{\eta(y_1+z_1)}+\eta ( \frac{z_2^2}{2}+\frac{z_3^2}{2})e^{\eta(z_1-y_1)}+\frac{e^{\eta(y_1+z_1)}-e^{-\eta(y_1+z_1)}}{2\eta}\right)\\[6pt]
 &&-\eta e^{\eta z_1}(y_2z_2+y_3z_3)z_2-\displaystyle \left(y_3e^{\eta z_1}+z_3e^{\eta(z_1-y_1)}\right)\left( \frac{e^{-2\eta z_1}-1}{\eta}-\eta z_2z_3\right),\\[8pt]
\dot{z}_3 & = & \displaystyle z_3\left( \eta( \frac{y_2^2}{2}+\frac{y_3^2}{2})e^{\eta(y_1+z_1)}+\eta ( \frac{z_2^2}{2}+\frac{z_3^2}{2})e^{\eta(z_1-y_1)}+\frac{e^{\eta(y_1+z_1)}-e^{-\eta(y_1+z_1)}}{2\eta}\right)\\[8pt]
 &&+\eta e^{\eta z_1}(y_2z_2+y_3z_3)z_3-\displaystyle \left(y_2e^{\eta z_1}+z_2e^{\eta(z_1-y_1)}\right)\left(\eta z_2z_3- \frac{e^{-2\eta z_1}-1}{\eta}\right),\\[6pt]
\end{array}
\]
respectively.

As we know, these systems are completely integrable and they admit reduction, via the multiplication $\cdot_\eta : G_\eta \times G_\eta \to G_\eta$, to the bi-Hamiltonian system on $G_\eta$ in Section \ref{Con-bi-Ha-sy}.

This last result can be straightforwardly shown by considering the new coordinates $(x, z) = ((x_1, x_2, x_3), (z_1, z_2, z_3))$ on $G_\eta \times G_\eta$, with $x = y \cdot_{\eta}  z$. In these new coordinates, the Poisson structures $\Pi_{0\eta}$ and $\Pi_{1\eta}$ on $G_\eta \times G_\eta$ are given by
\[
\begin{array}{rcl}
\Pi_{0\eta}(x,z)&=&\displaystyle-x_3\frac{\partial}{\partial x_1}\wedge \frac{\partial}{\partial x_2}+x_2\frac{\partial}{\partial x_1}\wedge \frac{\partial}{\partial x_3}+  \displaystyle \left(\frac{e^{-2\eta x_1}-1}{2\eta}-\eta(\frac{x_2^2}{2}+\frac{x_3^2}{2})\right)\frac{\partial}{\partial x_2}\wedge \frac{\partial}{\partial x_3}\\
&&\displaystyle -z_3\frac{\partial}{\partial x_1}\wedge \frac{\partial}{\partial z_2}+z_2\frac{\partial}{\partial x_1}\wedge \frac{\partial}{\partial z_3}+
\displaystyle z_3e^{-\eta (x_1-z_1)}\frac{\partial}{\partial x_2}\wedge \frac{\partial}{\partial z_1}  
-\displaystyle z_2e^{-\eta (x_1-z_1)}\frac{\partial}{\partial x_3}\wedge \frac{\partial}{\partial z_1}  
\\[8pt]
&&+\displaystyle \left(\frac{e^{-\eta (x_1+z_1)}-e^{-\eta (x_1-z_1)}}{2\eta}-\eta e^{-\eta (x_1-z_1)}(\frac{z_2^2}{2}+\frac{z_3^2}{2})\right)\frac{\partial}{\partial x_2}\wedge \frac{\partial}{\partial z_3}\\[8pt]
&&+\displaystyle \left(\frac{e^{-\eta (x_1-z_1)}-e^{-\eta (x_1+z_1)}}{2\eta}+\eta e^{-\eta (x_1-z_1)}(\frac{z_2^2}{2}+\frac{z_3^2}{2})\right)\frac{\partial}{\partial x_3}\wedge \frac{\partial}{\partial z_2}\\
&&\displaystyle-z_3\frac{\partial}{\partial z_1}\wedge \frac{\partial}{\partial z_2}+z_2\frac{\partial}{\partial z_1}\wedge \frac{\partial}{\partial z_3}+  \displaystyle \left(\frac{e^{-2\eta z_1}-1}{2\eta}-\eta(\frac{z_2^2}{2}+\frac{z_3^2}{2})\right)\frac{\partial}{\partial z_2}\wedge \frac{\partial}{\partial z_3},\\
\end{array}
\]
and
\[
\begin{array}{rcl}
\Pi_{1\eta}(x,z)&=&\displaystyle-x_2\frac{\partial}{\partial x_1}\wedge \frac{\partial}{\partial x_2}+x_3\frac{\partial}{\partial x_1}\wedge \frac{\partial}{\partial x_3}+  \displaystyle \left(\frac{e^{-2\eta x_1}-1}{\eta}-\eta x_2x_3\right)\frac{\partial}{\partial x_2}\wedge \frac{\partial}{\partial x_3}\\
&&\displaystyle -z_2\frac{\partial}{\partial x_1}\wedge \frac{\partial}{\partial z_2}+z_3\frac{\partial}{\partial x_1}\wedge \frac{\partial}{\partial z_3}+
\displaystyle 
 z_2e^{-\eta (x_1-z_1)}\frac{\partial}{\partial x_2}\wedge \frac{\partial}{\partial z_1}  
-\displaystyle z_3 e^{-\eta (x_1-z_1)}\frac{\partial}{\partial x_3}\wedge \frac{\partial}{\partial z_1}  
\\[8pt]
&&+\displaystyle \left(\frac{e^{-\eta (x_1+z_1)}-e^{-\eta (x_1-z_1)}}{\eta}-\eta e^{-\eta (x_1-z_1)} z_2z_3\right)\frac{\partial}{\partial x_2}\wedge \frac{\partial}{\partial z_3}\\[8pt]
&&+\displaystyle \left(\frac{e^{-\eta (x_1-z_1)}-e^{-\eta (x_1+z_1)}}{\eta}+\eta e^{-\eta (x_1-z_1)} z_2z_3\right)\frac{\partial}{\partial x_3}\wedge \frac{\partial}{\partial z_2}\\
&&\displaystyle-z_2\frac{\partial}{\partial z_1}\wedge \frac{\partial}{\partial z_2}+z_3\frac{\partial}{\partial z_1}\wedge \frac{\partial}{\partial z_3}+  \displaystyle \left(\frac{e^{-2\eta z_1}-1}{\eta}-\eta z_2z_3\right)\frac{\partial}{\partial z_2}\wedge \frac{\partial}{\partial z_3}.\\
\end{array}
\]
The Casimir functions of the Poisson brackets $\{\cdot, \cdot\}_{0\eta}$ and $\{\cdot, \cdot\}_{1\eta}$ are:
\[
\begin{array}{rcl}
{\mathcal C}_{0\eta} \circ pr_1 & = & \displaystyle -\frac{1}{2}e^{\eta(x_1-z_1)}(x_2^2 + x_3^2)\displaystyle -\frac{1}{2}e^{-\eta(x_1-z_1)}(z_2^2 + z_3^2)+(x_2z_2+x_3z_3)\\[6pt] 
&&-\displaystyle \frac{e^{\eta (x_1-z_1)}+e^{-\eta (x_1-z_1)}-2}{2\eta^2} ,\\[8pt] 
{\mathcal C}_{0\eta} \circ pr_2 & = & \displaystyle -e^{\eta z_1} \frac{z_2^2+z_3^2}{2}-\displaystyle \frac{e^{\eta z_1}+e^{-\eta z_1}-2}{2\eta^2}, 
\end{array}
\]
and
\[
\begin{array}{rcl}
{\mathcal C}_{1\eta} \circ pr_1 &=& \displaystyle e^{\eta (x_1-z_1)}x_2x_3-(x_2z_3+x_3z_2)+e^{-\eta (x_1-z_1)}z_2z_3\\[6pt] 
&&+\displaystyle \frac{e^{\eta (x_1-z_1)}+e^{-\eta (x_1-z_1)}-2}{\eta^2}, \\[8pt] 
{\mathcal C}_{1\eta} \circ pr_2 &= & e^{\eta z_1}(z_2z_3)+\displaystyle \frac{e^{\eta z_1}+e^{-\eta z_1}-2}{\eta^2}, \\
\end{array}
\]
respectively.

On the other hand, the Hamiltonian functions $H_{0\eta}$ and $H_{1\eta}$ read
 \[
H_{0\eta}= e^{\eta x_1}\,x_2\,x_3+\displaystyle \frac{e^{\eta x_1}+e^{-\eta x_1}-2}{\eta^2}, \; \; H_{1\eta}=-\displaystyle e^{\eta x_1} \frac{x_2^2+x_3^2}{2}-\displaystyle \frac{e^{\eta x_1}+e^{-\eta x_1}-2}{2\eta^2}
.\\
 \]
In the new coordinates $(x, z)$, the two completely integrable Hamiltonian systems $(G_\eta \times G_\eta, \{\cdot, \cdot\}_{0\eta}, H_{0\eta})$ and $(G_\eta \times G_\eta, \{\cdot, \cdot\}_{1\eta}, H_{1\eta})$ become
\begin{equation}\label{coupledH_0-eta-Euler}
\begin{array}{rcl}
\dot{x_1}&=& e^{\eta x_1}(x_2^2-x_3^2),\\
\dot{x_2}&=& \eta  e^{\eta x_1} x_2  x_3^2-\displaystyle\frac{1}{2}\eta e^{\eta x_1}(x_2^3-x_2 x_3^2)+\displaystyle  \frac{e^{\eta x_1}-e^{-\eta x_1}}{2 \eta}(2 x_3-x_2),\\[5pt]
 \dot{x_3}&=& -\eta  e^{\eta x_1} x_2^2  x_3+\displaystyle\frac{1}{2}\eta e^{\eta x_1}(x_2^2 x_3 +x_3^3)+\displaystyle  \frac{e^{\eta x_1}-e^{-\eta x_1}}{2 \eta}(x_3-2 x_2),\\
 \dot{z_1}&=& e^{\eta z_1}(x_2z_2-x_3z_3),\\[4pt]
  \dot{z_2}&=&\displaystyle \left(\eta e^{\eta x_1}x_2x_3+\frac{e^{\eta x_1}-e^{-\eta x_1}}{\eta}\right) z_3+ x_2\frac{ e^{-\eta z_1}-e^{\eta z_1}}{2\eta}+\eta x_2e^{\eta z_1}\displaystyle (\frac{z_2^2}{2}+\frac{z_3^2}{2}),\\[8pt]
    \dot{z_3}&=&\displaystyle \left(\eta e^{\eta x_1}x_2x_3+\frac{e^{\eta x_1}-e^{-\eta x_1}}{\eta}\right) (-z_2)+ x_3\frac{ e^{\eta z_1}-e^{-\eta z_1}}{2\eta}+\eta x_3e^{\eta z_1}\displaystyle (\frac{z_2^2}{2}+\frac{z_3^2}{2}),\\

\end{array}
\end{equation}
and
\begin{equation}\label{coupledH_1-eta-Euler}
\begin{array}{rcl}
\dot{x_1}&=& e^{\eta x_1}(x_2^2-x_3^2),\\
\dot{x_2}&=& \eta  e^{\eta x_1} x_2  x_3^2-\displaystyle\frac{1}{2}\eta e^{\eta x_1}(x_2^3-x_2 x_3^2)+\displaystyle  \frac{e^{\eta x_1}-e^{-\eta x_1}}{2 \eta}(2 x_3-x_2),\\[5pt]
 \dot{x_3}&=& -\eta  e^{\eta x_1} x_2^2  x_3+\displaystyle\frac{1}{2}\eta e^{\eta x_1}(x_2^2 x_3 +x_3^3)+\displaystyle  \frac{e^{\eta x_1}-e^{-\eta x_1}}{2 \eta}(x_3-2 x_2),\\
 \dot{z_1}&=& e^{\eta z_1}(x_2z_2-x_3z_3),\\[4pt]
  \dot{z_2}&=&-\displaystyle \left(\eta e^{\eta x_1}\frac{x_2^2+x_3^2}{2}+\frac{e^{\eta x_1}-e^{-\eta x_1}}{2\eta}\right) z_2+x_3\frac{ e^{\eta z_1}-e^{-\eta z_1}}{\eta}+\eta e^{\eta z_1}x_3z_2z_3\\[8pt]  
 \dot{z_3}&=&\displaystyle \left(\eta e^{\eta x_1}\frac{x_2^2+x_3^2}{2}+\frac{e^{\eta x_1}-e^{-\eta x_1}}{2\eta}\right)z_3-x_2\frac{ e^{\eta z_1}-e^{-\eta z_1}}{\eta}-\eta e^{\eta z_1}x_2z_2z_3,\\
\end{array}
\end{equation}
respectively.

Finally, as in the Lorenz system, the multiplication $\cdot_\eta: G_\eta \times G_\eta \to G_\eta$ in the new variables is just the first projection, that is,
\begin{equation}\label{multiplication-Euler}
+((x_1, x_2, x_3), (z_1, z_2, z_3)) = (x_1, x_2, x_3).
\end{equation}
So, using (\ref{Euler.deformation}), (\ref{coupledH_0-eta-Euler}), (\ref{coupledH_1-eta-Euler}) and (\ref{multiplication-Euler}), we directly deduce that the two completely integrable Hamiltonian systems $(G_\eta \times G_\eta, \{\cdot, \cdot\}_{0\eta}, H_{0\eta})$ and $(G_\eta \times G_\eta, \{\cdot, \cdot\}_{1\eta}, H_{1\eta})$ admit reduction, via the multiplication $\cdot_{\eta} : G_\eta \times G_\eta \to G_\eta$, to the bi-Hamiltonian system on $G_\eta$ considered in Section \ref{Con-bi-Ha-sy}.


\section{Concluding remarks}

In this paper we have presented the generalization of the Poisson coalgebra construction of integrable deformations of Hamiltonian systems to the case when the initial system is bi-Hamiltonian under a pair of Lie-Poisson structures. Moreover,  the method here presented allows the systematic construction, under certain conditions, of pairs of coupled completely integrable Hamiltonian systems on Poisson-Lie groups. In this way, integrable deformations of the Lorenz and Euler top systems have been explicitly constructed.

It is worth recalling that a complete classification of the Lie-Poisson completely integrable bi-Hamiltonian systems on $\mathbb{R}^3$, which have non-transcendental integrals of motion, may be found  in~\cite{HoPe} (see also \cite{GuNu}).  In fact, the Euler top system is labeled with the number (6) in Table 1 of~\cite{HoPe}.
On the other hand, a complete classification of non-equivalent adjoint $1$-cocycles on Lie algebras of dimension $3$, whose dual maps satisfy the Jacobi identity ({\em i.e.} a complete classification of non-equivalent Lie bialgebras of dimension $3$) may be found in \cite{Go}.

By using the results in~\cite{HoPe} and~\cite{Go}, it can be easily proven by direct inspection that the method here presented could be also straightforwardly applied to the Lie-Poisson completely integrable bi-Hamiltonian systems on $\mathbb{R}^3$ which are labelled as (2), (4) and (5) in Table 1 of~\cite{HoPe}, since these are the only cases for which a common 1-cocycle does exist. Therefore, for all these cases we could obtain pairs of completely integrable systems on the product of a certain Lie-Poisson group $G_\eta$ with itself, whose projection, via the group multiplication, leads to a completely integrable bi-Hamiltonian system on $G_\eta$. The search for other Lie-Poisson bi-Hamiltonian systems on $\mathbb{R}^N$ (with $N\geq 4$) and their compatible 1-cocycles is currently under investigation.

Finally, we stress that it would be interesting to get a deeper insight into the underlying geometric structure of the (quasi) bi-Hamiltonian systems on Poisson-Lie groups that have arised in the present paper. Work on this line is also in progress and will be presented elsewhere.

\end{document}